\newtheorem{theorem}{Theorem}[section]
\newtheorem{lemma}[theorem]{Lemma}
\numberwithin{equation}{section}
\newcommand{\qed}{\rule{7pt}{7pt}}
\newenvironment{proof}{\noindent{\bf Proof}\hspace*{1em}}{\hfill\qed\vspace{0.125in}}
\begin{document}
%
\title{Compressed Hypothesis Testing: to Mix or Not to Mix?}


\author[1]{Weiyu Xu}
\author[2]{Lifeng Lai}

\affil[1]{Department of ECE\\
University of Iowa}
\affil[2]{Department of Electrical and Computer Engineering\\
Worcester Polytechnic Institute}

\maketitle

\begin{abstract}
In this paper, we study the hypothesis testing problem of, among $n$ random variables, determining $k$ random variables which have different probability distributions from the rest $(n-k)$ random variables. Instead of using separate measurements of each individual random variable, we propose to use mixed measurements which are functions of  multiple random variables. It is demonstrated that $O({\displaystyle \frac{k \log(n)}{\min_{P_i, P_j} C(P_i, P_j)}})$ observations are sufficient for correctly identifying the $k$ anomalous random variables with high probability, where $C(P_i, P_j)$ is the Chernoff information between two possible distributions $P_i$ and $P_j$ for the proposed mixed observations. We characterized the Chernoff information respectively under fixed time-invariant mixed observations, random time-varying mixed observations, and deterministic time-varying mixed observations; in our derivations, we introduced the \emph{inner and outer conditional Chernoff information} for time-varying measurements. It is demonstrated that mixed observations can strictly improve the error exponent of hypothesis testing, over separate observations of individual random variables. We also characterized the optimal mixed observations maximizing the error exponent, and derived an explicit construction of the optimal mixed observations for the case of Gaussian random variables. These results imply that mixed observations of random variables can reduce the number of required samples in hypothesis testing applications. Compared with compressed sensing problems, this paper considers random variables which are allowed to dramatically change values in different measurements.
\end{abstract}

\IEEEpeerreviewmaketitle

\section{Introduction}

 In many areas of science and engineering, one needs to infer statistical information of objects of interest. Here, the statistical information of interest can be the mean, variance or even distributions of certain random variables \cite{quickestdetection, DuffieldProbabilistic, ENV:ENV265,  Pisarenko19874,PrestiProbabilistic, TseLink}. In fact, inferring distributions of random variables are essential in anomaly detections, for example, quickest detections of potential hazards or changes, \cite{quickestdetection, basseville, LaiCognitive}. In \cite{quickestdetection, LifengMultiple, MalloyAsilomar,MalloyISIT,MalloyRare, LaiCognitive}, the authors are interested in knowing the probability distribution of $n$ independent random variables $X_i$, $1\leq i \leq n$, which follow either probability distribution $f_1(\cdot)$ or $f_2(\cdot)$. In order to get the probability distribution information about the $n$ random variables, one can sample each random variable separately multiple times, and then infer probability distribution information from their separate samples. However, when the number of random variables $n$ goes large, the requirement on sampling rates and sensing resources can be tremendous. In some applications, due to physical constraints \cite{DuffieldProbabilistic, PrestiProbabilistic, TseLink}, we cannot even directly get separate samples of individual random variables.  This raises the question of whether we can infer statistical information of interest from a much smaller number of samples.

Fortunately, in some applications such as network tomography \cite{DuffieldProbabilistic, TseLink} and cognitive radio \cite{LaiCognitive}, these $n$ random variables can be described by a parsimonious model. For example, suppose only $k$ ($k \ll n$) out of $n$ random variables take the probability distribution $f_2(\cdot)$ while the much larger set of remaining $(n-k)$ random variables take the probability distribution $f_1(\cdot)$ \cite{MalloyAsilomar}. Again, one natural way to find out the $k$ anomalous random variables is to do one-by-one hypothesis testing for these $n$ random variables. One can get $l$ samples for each random variable $X_i$, $1\leq i \leq n$, and then use existing hypothesis testing techniques to determine whether $X_i$ follows the probability distribution $f_1(\cdot)$ or $f_2(\cdot)$. Thus, to ensure correctly identifying the $k$ anomalous random variables with high probability, at least $\Theta(n)$ samples are needed for one-by-one hypothesis testing. This inevitably creates an enormous burden on data collecting and processing, especially when $n$ is very large, and the number of sensing samples is very small.

In this paper, since $k \ll n$, inspired by compressed sensing \cite{CT1,Neighborlypolytope,DonohoTanner}, we propose to use \emph{non-adaptive} \emph{\emph{mixed}} measurements (or observations, which we use interchangeably with measurements) of $n$ random variables, instead of separate sampling of these $n$ random variables individually, to quickly detect the $k$ anomalous random variables. The basic idea is to make each sample a function of $n$ random variables, instead of a realization of an \emph{individual} random variables. In this paper, we consider three different types of mixed observations: \emph{fixed time-invariant} mixed measurements, \emph{random time-varying} measurements, and \emph{deterministic time-varying} measurements. For these different types of mixed observations, we characterize the number of measurements needed to achieve a specified hypothesis testing error probability. Our analysis has shown that there is an advantage of performing this method in reducing the number of required samples for reliable hypothesis testing. The number of samples needed to correctly identify the $k$ anomalous random variables can be reduced to $O({\displaystyle \frac{k \log(n)}{\min_{P_i, P_j} C(P_i, P_j)}})$ observations, where $C(P_i, P_j)$ is the Chernoff information between two possible distributions $P_i$ and $P_j$ for the proposed mixed observations.

In this paper, we have also shown that mixed observations can strictly reduce error exponent of hypothesis testing, compared with separate sampling of individual random variables. Under the special cases of Gaussian random variables, optimal mixed measurements are derived to maximize the error exponent of hypothesis testing.

In \cite{MalloyAsilomar, MalloyISIT, MalloyRare}, the authors have considered the same problem setting of finding the $k$ anomalous random variables among $n$ random variables. By utilizing the sparsity of anomalous random variables, \cite{MalloyAsilomar, MalloyISIT, MalloyRare} optimized adaptive samplings of individual random variables, and have reduced the number of needed samples for individual random variables. Compared with this paper, \cite{MalloyAsilomar, MalloyISIT, MalloyRare} have considered adaptive observations of individual random variables, instead of \emph{\emph{non-adaptive}} \emph{mixed} observations in this paper. It is noted that the total number of observations is at least $\Theta(n)$ \cite{MalloyAsilomar, MalloyISIT, MalloyRare}, if one is restricted to sample the $n$ random variables individually. In a related work of ours, we have considered \emph{adaptive} mixed observations in quickest detection and search problem \cite{JunISIT} to speed up finding one anomalous random variable.

This paper is organized as follows. In Section \ref{sec:model}, we introduce the mathematical models for considered problems and mixed observations. In Section \ref{sec:fixed_time_invariant}, we investigate the hypothesis testing problem using time-invariant mixed observations; and propose corresponding hypothesis testing algorithms and analysis. In Section \ref{sec:random_timevarying}, we consider using random time-varying mixed observations to identify the anomalous random variables. In Section \ref{sec:deterministic_timevarying}, we consider using deterministic time-varying mixed observations for hypothesis testing, and derive a bound on the error probability. In Section \ref{sec:example}, we demonstrate, by examples of Gaussian random variable vectors, that linear mixed observations can strictly improve the error exponent in hypothesis testing. In Section \ref{sec:optimalmixing}, we derive the optimal mixed measurements for Gaussian random variables. In Section \ref{sec:simulation}, we provide numerical simulation results. Section \ref{sec:conclusion} concludes this paper.

\section{Mathematical Models}
\label{sec:model}
We consider $n$ i.i.d. random variables $X_1$, $X_2$,..., $X_n$. Out of these $n$ random variables, $(n-k)$ of them follow a known distribution $f_1(\cdot)$; while the other $k$ random variables follow another known distribution $f_2(\cdot)$, where $k \ll n$.  However, it is unknown which $k$ random variables follow the distribution $f_2(\cdot)$. Our objective is then to identify these $k$ anomalous random variables, with as few samples as possible.

We take $m$ mixed observations of the $n$ random variables at $m$ time indices: at time index $t=j$, $1\leq j \leq m$, the measurement result $$Y_j=g_j(X_1^j, X_2^j,..., X_n^j),$$ is a function of the realizations of the $n$ random variables at time $j$. In this paper, we only consider the case when the functions $g_j$ are \emph{linear}. When the functions $g_j$, $1\leq j \leq m$, are the \emph{same} for each $1\leq j \leq m$, we simply denote them by $g(\cdot)$. Thus for $1\leq j \leq m$, the $j$-th measurement  $$Y_j=g_j(X_1^j, X_2^j,..., X_n^j)=\sum_{i=1}^{n}a_i^j X_i^j,$$ where $a_i^j$ is a real number.  In this paper, each random variable $X_i$ takes an \emph{independent} realization in each measurement; while in the now well-known compressed sensing problem, for each measurement, the variables of interest, denoted by $(x_1,x_2,x_3,...,x_n)$, are \emph{deterministic} values,  and basically take the same values in each measurement. In some sense, our problem is a probabilistic generalization of the compressed sensing problems.

In the compressed sensing literature, Bayesian compressed sensing \cite{BayesianCS1, StatisticalCS2} stands out as one model where prior probability distribution of the vector $(x_1,x_2,x_3,...,x_n)$ is considered. However, in \cite{BayesianCS1, StatisticalCS2}, the values of $(x_1,x_2,x_3,...,x_n)$ are \emph{fixed} once the random variables are realized from the prior probability distribution, and then generally remain \emph{unchanged} across different measurements. That is fundamentally different from our setting where the random variables dramatically change over different measurements. In the compressed sensing literature, there are very interesting works discussing compressed sensing for \emph{smoothly} time-varying signals \cite{timevaryingGiannakis, Vaswani, DavidTimeVarying}. In contrast, objects of interest in this research are random variables taking \emph{completely independent} realizations at different time indices; and, thus, we are interested in recovering statistical information of random variables, rather than recovering the deterministic values.

\section{Taking Fixed Time-Invariant Measurements}
\label{sec:fixed_time_invariant}
In this section, we consider mixed measurements which are time-invariant across different measurements. We first give the likelihood ratio test algorithm, namely Algorithm \ref{alg:likelihoodratiotest_timeinvariant}, over the possible $\binom{n}{k}$ hypothesis. Then we analyze the number of needed samples through another suboptimal algorithm \ref{alg:pairwise_timeinvariant}.

\subsection{Algorithm}

\begin{algorithm}
 \SetAlgoLined
 \KwData{observation data $Y_1$, $Y_2$, ..., $Y_m$}
 \KwResult{$k$ anomalous random variables}

\begin{itemize}
\item For each hypothesis $H_i$ ($1\leq i\leq L$), calculate the likelihood $P(Y|H_i)$:
\item Choose the hypothesis with the maximum likelihood.
\item Decide the corresponding $k$ random variables as the anomalous random variables.
\end{itemize}

 \caption{Likelihood Ratio Test from Deterministic Time Invariant Measurements}
 \label{alg:likelihoodratiotest_timeinvariant}
\end{algorithm}

To analyze the number of needed samples for achieving a certain hypothesis testing error probability, we consider another hypothesis testing Algorithm \ref{alg:pairwise_timeinvariant} based on pairwise hypothesis testing, which is suboptimal compared to the likelihood ratio test algorithm. There are $L=\binom{n}{k}$ possible probability distributions for the output of the function $g(\cdot)$, depending on which $k$ random variables are anomalous. We denote these possible probability distributions as $P_1$, $P_2$, ..., $P_L$. Our simple algorithm is to find the true distribution by doing pairwise Neyman-Pearson hypothesis testing \cite{coverbook} of these $L$ probability distributions. We denote the $m$ observations by $Y_1$, $Y_2$, ..., $Y_m$.

\begin{algorithm}
 \SetAlgoLined
 \KwData{observation data $Y_1$, $Y_2$, ..., $Y_m$}
 \KwResult{$k$ anomalous random variables}
\begin{itemize}
\item
For all pairs of distinct probability distributions $P_i$ and $P_j$ ($1\leq i, j \leq L$ and $i\neq j$), perform Neyman-Pearson testing for two hypothesis:
\begin{itemize}
\item $Y_1$, $Y_2$, ..., $Y_m$ follow probability distribution $P_i$
\item $Y_1$, $Y_2$, ..., $Y_m$ follow probability distribution $P_j$
\end{itemize}

\item

  \eIf{there exists a certain $j^*$, $P_{j^*}$ is the winning probability distribution whenever it is involved in a pairwise hypothesis testing, }{
   declare the $k$ random variables producing $P_{j^*}$ as anomalous random variables\;
   }{ declare a failure in finding the $k$ anomalous random variables.
  }

\end{itemize}

\caption{Hypothesis Testing from Time-invariant Mixed Measurements}
\label{alg:pairwise_timeinvariant}
\end{algorithm}

\subsection{Number of Samples}
\begin{theorem}
Consider time-invariant fixed observations $Y$ for $n$ random variables $X_1$, $X_2$, ..., and $X_n$. With $O(\frac{k \log(n)}{\min_{1\leq i,j \leq L, i\neq j} C(P_i, P_j)}$ mixed measurement, with high probability, Algorithms \ref{alg:likelihoodratiotest_timeinvariant} and \ref{alg:pairwise_timeinvariant} correctly identify the $k$ anomalous random variables.
Here $L$ is the number of hypothesis, $P_i$, $1\leq i \leq n$ is the output probability distribution for measurements $Y$ under hypothesis $H_i$, and
\begin{equation*}
C(P_i, P_j)=-\min_{0\leq \lambda \leq 1} \log\left(\int{P_i^{\lambda}(x)P_j^{1-\lambda}(x)dx}\right)
\end{equation*}
is the Chernoff information between two distributions $P_i$ and $P_j$.
\label{thm:numberofsamples}
\end{theorem}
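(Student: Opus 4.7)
The plan is to analyze the suboptimal pairwise Algorithm \ref{alg:pairwise_timeinvariant} first, since any upper bound on its error probability immediately upper bounds the error probability of the likelihood ratio test of Algorithm \ref{alg:likelihoodratiotest_timeinvariant}. (The likelihood ratio test is Bayes optimal over the $L$ hypotheses with uniform prior, so whenever the pairwise scheme succeeds, maximum likelihood cannot do worse in terms of the union bound we are about to derive.) The strategy is therefore to control the error probability of each of the $\binom{L}{2}$ pairwise Neyman--Pearson tests separately using the Chernoff exponent, and then union bound.

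First I would fix an arbitrary pair $(P_i,P_j)$ of output distributions corresponding to two distinct subsets of anomalous variables. Since the measurement function $g(\cdot)$ is time-invariant and each random variable draws an independent realization at every time index, the observations $Y_1,\dots,Y_m$ are i.i.d.\ under the true hypothesis. Classical Chernoff--Stein theory (see, e.g., \cite{coverbook}) then gives that the Neyman--Pearson test between $P_i^{\otimes m}$ and $P_j^{\otimes m}$ achieves
\begin{equation*}
\prob(\text{error in pairwise test of }P_i\text{ vs.\ }P_j)\;\leq\; e^{-m\, C(P_i,P_j)},
\end{equation*}
where $C(P_i,P_j)$ is exactly the Chernoff information defined in the theorem. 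Note that only the error event for the \emph{true} distribution matters for Algorithm \ref{alg:pairwise_timeinvariant}: if the true distribution is $P_{j^*}$, a failure happens only when $P_{j^*}$ loses a pairwise test against some $P_i$, $i\neq j^*$.

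Next I would apply a union bound over the at most $L-1$ pairwise tests involving $P_{j^*}$:
\begin{equation*}
\prob(\text{Algorithm \ref{alg:pairwise_timeinvariant} fails})\;\leq\;(L-1)\, e^{-m\, C_{\min}},\qquad C_{\min}:=\min_{i\neq j}C(P_i,P_j).
\end{equation*}
Using the standard estimate $L=\binom{n}{k}\leq n^{k}$, it suffices to choose
\begin{equation*}
m\;=\;\frac{c\,k\log(n)}{C_{\min}}
\end{equation*}
with any constant $c>1$; then $(L-1)e^{-mC_{\min}}\leq n^{k}\cdot n^{-ck}=n^{-(c-1)k}\to 0$, giving the desired $O\!\bigl(\tfrac{k\log n}{C_{\min}}\bigr)$ sample complexity. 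Because the likelihood ratio test's error probability is upper bounded by that of any other decision rule (in particular the pairwise scheme, once we observe that the pairwise scheme's output equals the ML estimate whenever a ``Condorcet winner'' exists), the same bound transfers to Algorithm \ref{alg:likelihoodratiotest_timeinvariant}.

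The main obstacle I expect is a clean justification that the pairwise Neyman--Pearson error probability actually attains the Chernoff exponent \emph{for the side corresponding to the true hypothesis} (as opposed to a trade-off between Type I and Type II), and cleanly relating the pairwise scheme's success event to an upper bound on the ML error probability; both are standard but need to be stated carefully. The rest is routine counting and a union bound.
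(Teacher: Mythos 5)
Your proposal is correct and follows essentially the same route as the paper's own proof: pairwise Neyman--Pearson tests with threshold set at the Chernoff point so each pairwise error decays as $2^{-mC(P_i,P_j)}$, a union bound over the $L-1$ pairs involving the true distribution, and the estimate $L=\binom{n}{k}\leq n^k$ to conclude that $m=O\bigl(k\log(n)/\min_{i\neq j}C(P_i,P_j)\bigr)$ suffices. Your added care about why the true-hypothesis side of each test attains the Chernoff exponent and how the pairwise scheme's guarantee transfers to the likelihood ratio test is a welcome tightening of points the paper simply asserts.
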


\begin{proof}
In Algorithm \ref{alg:pairwise_timeinvariant}, for two probability distributions $P_i$ and $P_j$, we choose the probability likelihood ratio threshold of the Neyman-Pearson testing in such a way that the error probability decreases with the largest possible error exponent, namely the Chernoff information between $P_i$ and $P_j$
\begin{equation*}
C(P_i, P_j)=-\min_{0\leq \lambda \leq 1} \log\left(\int{P_i^{\lambda}(x)P_j^{1-\lambda}(x)dx}\right)
\end{equation*}

So overall, the smallest possible error exponent of making a error between any pair of probability distributions is
\begin{equation*}
E=\min_{1\leq i,j \leq L, i\neq j} C(P_i, P_j).
\end{equation*}

Without loss of generality, we assume that $P_1$ is the true probability distribution for the observation data $Y$. Since the error probability $P_{e}$ in the Neyman-Pearson testing scales like $P_{e}\doteq2^{-m C(P_i, P_j)} \leq 2^{-mE}$, by a union bound over the $L-1$ possible pairs $(P_1, P_j)$, the probability that $P_1$ is not correctly identified as the true probability distribution scales at most as $L2^{-mE}$. So $\Theta(k \log(n)E^{-1})$ samplings are enough for identifying the $k$ anomalous samples with high probability.

\end{proof}

When $E$ grows polynomially with $n$, this implies a significant reduction in the number of samples needed. Consider a simple example where $k=1$, $f_1(x)\sim \delta(x)$ is the Dirac delta function, and $f_2(x)\sim N(0,1)$ is the standard Gaussian distribution. Let $P_j$, $1\leq j \leq n$, be the sketch output distribution corresponding to the case where the $j$-th random variable follows distribution $ N(0,1)$. Then for $1\leq i,j \leq L$, $i\neq j$, using the Chernoff information results for two Gaussian random variables \cite{ChernoffInformation},
\begin{equation*}
C(P_i,P_j)= \max_{0\leq \beta \leq 1}\frac{1}{2}\log\frac{\beta a_i^2+(1-\beta)a_j^2}{a_i^{2\beta}a_j^{2(1-\beta)}}\geq \frac{1}{2}\log \frac{a_i^2+a_j^2}{{2a_i a_j}}.
\end{equation*}

So as long as the ratio $\frac{\max\{\alpha_i, \alpha_j\}}{\min\{\alpha_i, \alpha_j\}}$ for any two coefficients $\alpha_i$ and $\alpha_j$ is always larger than a constant $\gamma>2$, we just need $O(\log(n))$ measurements to find out which random variable is abnormal. If we are allowed to use time-varying non-adaptive sketching functions or are allowed to design adaptive measurements based on the history of measurement results, we may need fewer samples. In the next section, we discuss the performance of time-varying non-adaptive mixed measurements for this problem.

\section{Taking Random Time-Varying Measurements}
\label{sec:random_timevarying}

In this section, we consider the same problem setup as in Section \ref{sec:model}, except that each measurement is the inner product between $X$ and one independent realization $(a_1^j,a_2^j,..., a_{n}^j)$ of a random vector $A$. Namely, each observation is given by
$$Y_j=<A^j, X>=\sum\limits_{i=1}^{n}a_i^j X_i, 1\leq j \leq m,$$
where $A^j=(a_1^j,a_2^j,..., a_{n}^j)$.
We assume that the random vector $A$ has a probability density function $P(A)$, and, the realizations $A^j$'s of $A$ are independent across different measurements.  We consider random time-varying measurements, because, inspired compressed sensing, random measurements often given desirable performance \cite{CT1, DonohoTanner}.

Under this setup, we would like to design hypothesis testing algorithms, to decide which probability distribution the vector $X$ is following among the $\binom{n}{k}$ hypothesis; moreover, we are interested in analyzing the error probability of such hypothesis testing algorithms.

\subsection{Hypothesis Testing from Random Time-Varying Measurements}
We first give the likelihood ratio test algorithm over the possible $\binom{n}{k}$ hypothesis.

\begin{algorithm}
 \SetAlgoLined
 \KwData{observation data $(A^1, Y_1)$, $(A^2,Y_2)$, ..., $(A^m,Y_m)$}
 \KwResult{$k$ anomalous random variables}

\begin{itemize}
\item For each hypothesis $H_i$ ($1\leq i\leq L$), calculate the likelihood $P(A,Y|H_i)$:
\item Choose the hypothesis with the maximum likelihood.
\item Decide the corresponding $k$ random variables as the anomalous random variables.
\end{itemize}

 \caption{Likelihood Ratio Test from Random Time-Varying Measurements}
 \label{alg:likelihoodratiotest_timevarying}
\end{algorithm}

For the purpose of analyzing the error probability of likelihood ratio test, we further propose one hypothesis testing algorithm based on pairwise comparison.

\begin{algorithm}
 \SetAlgoLined
 \KwData{observation data $(A^1, Y_1)$, $(A^2,Y_2)$, ..., $(A^m,Y_m)$}
 \KwResult{$k$ anomalous random variables}
\begin{itemize}
\item For all pairs of hypothesis $H_i$ and $H_j$ ($1\leq i, j \leq L$ and $i\neq j$), perform Neyman-Pearson testing of the following two hypothesis:
\begin{itemize}
\item $(A^1,Y_1)$, $(A^2,Y_2)$, ..., $(A^m, Y_m)$ follow the probability distribution $P(A,Y|H_i)$;
\item $(A^1,Y_1)$, $(A^2,Y_2)$, ..., $(A^m, Y_m)$ follow probability distribution $P(A,Y|H_j)$.
\end{itemize}

\item
  \eIf{there exists a certain $j^*$, such that $H_{j^*}$ is the winning hypothesis, whenever it is involved in a pairwise hypothesis testing, }{
   declare the $k$ random variables producing $H_{j^*}$ as anomalous random variables\;
   }{ declare a failure in finding the $k$ anomalous random variables.
  }

\end{itemize}

 \caption{Hypothesis Testing from Random Time-Varying Measurements}
 \label{alg:timevaryingalg2}
\end{algorithm}

\subsection{Number of Samples for Random Time-Varying Measurements}

\begin{theorem}
Consider time variant random observations $Y_j$, $1\leq j \leq m$, for $n$ random variables $X_1$, $X_2$, ..., and $X_n$. With $O(\frac{k \log(n)}{\min\limits_{1\leq i,j \leq L, i\neq j} IC(P_{Y|A,H_i}, P_{Y|A,H_j})})$ random time-varying measurements, with high probability, Algorithms \label{alg:likelihoodratiotest_timevarying} and  \label{alg:likelihoodratiotest_timevarying} correctly identify the $k$ anomalous random variables.
Here $L$ is the number of hypothesis, $P_{Y|A,H_i}$, $1\leq i \leq n$ is the output probability distribution for measurements $Y$ under hypothesis $H_i$ and measurements $A$; and
\begin{eqnarray*}
&&IC(P_{Y|A,H_i}, P_{Y|A,H_j})\\
&&=-\min_{0\leq \lambda \leq 1} \log\left(\int P(A)\right.\\
&&~~~~~~~~~~~~~~~~~~~~\left.(\int{P_{Y|A,H_i}^{\lambda}(x)P_{Y|A,H_j}^{1-\lambda}(x)\,dx}) \,dA\right)\\
&&=-\min_{0\leq \lambda \leq 1} \log\left( E_{A} \left(\int{P_{Y|A,H_i}^{\lambda}(x)P_{Y|A,H_j}^{1-\lambda}(x)\,dx}\right)\right)
\end{eqnarray*}
is the \emph{inner conditional Chernoff information} between two hypothesis for observations $Y$, conditioned on the probability distribution of time-varying measurements $A$.
\label{thm:numberofsamples}
\end{theorem}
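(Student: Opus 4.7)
The plan is to mirror the proof of Theorem \ref{thm:numberofsamples} for the time-invariant case, with the crucial modification that the Neyman--Pearson test is now performed on the \emph{joint} observations $(A^j, Y_j)$ rather than on $Y_j$ alone. I will focus on Algorithm \ref{alg:timevaryingalg2}, since it is suboptimal with respect to the likelihood-ratio test of Algorithm \ref{alg:likelihoodratiotest_timevarying}, and any sample-complexity upper bound for the former immediately transfers to the latter.

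The first step is to identify the correct Chernoff error exponent for a single pairwise test. Under every hypothesis $H_i$ the joint density of one measurement factors as $P(A,Y\mid H_i)=P(A)\,P_{Y\mid A,H_i}(Y)$, because the distribution of the sketching vector is hypothesis-independent. Plugging this factorization into the classical Chernoff-information formula and pulling the common factor $P(A)$ out, I obtain
\begin{align*}
&-\min_{0\le\lambda\le1}\log\int\!\!\int \bigl(P(A)P_{Y\mid A,H_i}(Y)\bigr)^{\lambda}\bigl(P(A)P_{Y\mid A,H_j}(Y)\bigr)^{1-\lambda} dY\, dA \\
&\quad=-\min_{0\le\lambda\le1}\log\,E_{A}\!\left(\int P_{Y\mid A,H_i}^{\lambda}(x)\,P_{Y\mid A,H_j}^{1-\lambda}(x)\,dx\right),
\end{align*}
which is exactly the inner conditional Chernoff information $IC(P_{Y\mid A,H_i},P_{Y\mid A,H_j})$ defined in the statement. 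Since the $m$ pairs $(A^j,Y_j)$ are i.i.d.\ under every hypothesis, Stein/Chernoff for product distributions then gives a Neyman--Pearson pairwise error probability of $\dot{=}\,2^{-m\,IC(P_{Y\mid A,H_i},P_{Y\mid A,H_j})}$.

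Next, I set $E=\min_{1\le i\ne j\le L} IC(P_{Y\mid A,H_i},P_{Y\mid A,H_j})$ and assume without loss of generality that $H_1$ is the true hypothesis. For Algorithm \ref{alg:timevaryingalg2} to succeed it suffices that $H_1$ beat every competitor $H_j$ ($j\ne1$) in its pairwise test. By the Chernoff bound above, each of the $L-1$ pairwise errors is at most $2^{-mE}$, so a union bound yields an overall error probability at most
\[
(L-1)\,2^{-mE}\ \le\ \binom{n}{k}2^{-mE}\ \le\ n^{k}\,2^{-mE}.
\]
Choosing $m=\Theta\!\bigl(k\log(n)/E\bigr)$ drives this bound to $o(1)$, which yields the advertised sample complexity. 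Since the likelihood-ratio test of Algorithm \ref{alg:likelihoodratiotest_timevarying} is optimal among all tests, it enjoys the same guarantee.

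The only step requiring any care is the first one: verifying that the joint Chernoff information of $(A,Y)$ collapses to the inner conditional Chernoff information of $Y$ given $A$. Because $P(A)$ is shared between the hypotheses, the factor $P(A)^{\lambda}P(A)^{1-\lambda}=P(A)$ factors out cleanly and yields the $E_A[\cdot]$ form, with no Jensen-type loss. Once this identification is made, the remainder is a routine Neyman--Pearson plus union-bound argument identical in structure to the proof of the time-invariant case, so I do not expect any genuine obstacle.
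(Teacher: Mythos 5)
Your proposal is correct and follows essentially the same route as the paper: it analyzes the pairwise Neyman--Pearson algorithm, uses the factorization $P(A,Y\mid H_i)=P(A)P(Y\mid A,H_i)$ to collapse the joint Chernoff information of $(A,Y)$ into the inner conditional Chernoff information, and finishes with the same union bound over the $L-1$ competing hypotheses to get $m=\Theta(k\log n/E)$. The only element of the paper's proof you omit is a side remark (a H\"older-inequality lower bound showing $IC$ is positive whenever some positive-probability measurement has positive ordinary Chernoff information), which is not needed for the stated sample-complexity claim.
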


\begin{proof}
In Algorithm \ref{alg:timevaryingalg2}, for two different hypothesis $H_i$ and $H_j$, we choose the probability likelihood ratio threshold of the Neyman-Pearson testing in a way, such that the hypothesis testing error probability decreases with the largest error exponent, namely the Chernoff information between $P(A,Y|H_i)$ and $P(A,Y|H_j)$
\begin{eqnarray*}
&&IC(P(A,Y|H_i), P(A,Y|H_j))\\
&&=-\min_{0\leq \lambda \leq 1} \log\left(\int{P(A,Y|H_i)^{\lambda}(x)P(A,Y|H_j)^{1-\lambda}(x)dx}\right)
\end{eqnarray*}

Since the random tim-varying measurements are independent of random samples $X$ and the hypothesis $H_i$ or $H_j$,
$$P(A,Y|H_i)=P(A|H_i) P(Y|H_i,A)=P(A) P(Y|H_i,A),$$
$$P(A,Y|H_j)=P(A|H_j) P(Y|H_j,A)=P(A) P(Y|H_j,A).$$
Then the Chernoff information is simplified to
\begin{eqnarray*}
&&IC(P_{Y|A,H_i}, P_{Y|A,H_j})\\
&&=-\min_{0\leq \lambda \leq 1} \log\left(\int P(A)\right.\\
&&~~~~~~~~~~~~~~~~~~~~\left.(\int{P_{Y|A,H_i}^{\lambda}(x)P_{Y|A,H_j}^{1-\lambda}(x)\,dx}) \,dA\right)\\
&&=-\min_{0\leq \lambda \leq 1} \log\left( E_{A} \left(\int{P_{Y|A,H_i}^{\lambda}(x)P_{Y|A,H_j}^{1-\lambda}(x)\,dx}\right)\right)
\end{eqnarray*}

By the Holder's inequality, we have
\begin{eqnarray*}
&&IC(P_{A,Y|H_i}, P_{A,Y|H_j})\\
&&\geq -\min_{A}\log\left(1-P(A)+P(A)e^{-C(P_{Y|A, H_i}, P_{Y|A, H_j})} \right),\\
\end{eqnarray*}
where
\begin{eqnarray*}
&&C(P_{Y|A, H_i}, P_{Y|A, H_j})\\
&&=-\min_{0\leq \lambda \leq 1} \log\left(\int{P(Y|A,H_i)^{\lambda}(x)P(Y|A,H_j)^{1-\lambda}(x)dx}\right)
\end{eqnarray*}
is the ordinary Chernoff information between $P_{Y|A, H_i}$, and $P_{Y|A, H_j}$. So as long there exits measurements $A$ of a positive probability, such that the ordinary Chernoff information
is positive, then the inner condition Chernoff information $IC(P_{A,Y|H_i}, P_{A,Y|H_j})$ will also be positive.

Overall, the smallest possible error exponent between any pair of hypothesis is
\begin{equation*}
E=\min_{1\leq i,j \leq L, i\neq j} C(P_{A,Y|H_i}, P_{A,Y|H_j}).
\end{equation*}

Without loss of generality, we assume $H_1$ is the true hypothesis. Since the error probability $P_{e}$ in the Neyman-Pearson testing is
$$P_{e}\doteq2^{-m C(P_{A,Y|H_i}, P_{A,Y|H_j})} \leq 2^{-mE}.$$ By a union bound over the $L-1$ possible pairs $(H_1, H_j)$, the probability that $H_1$ is not correctly identified as the true hypothesis is upper bounded by $L2^{-mE}$ in terms of scaling. So $m=\Theta(k \log(n)E^{-1})$ samplings are enough for identifying the $k$ anomalous samples with high probability.  When $E$ grows polynomially with $n$, this implies a significant reduction in the number of needed samples.

\end{proof}

\section{Taking Deterministic Time-Varying Measurements}
 \label{sec:deterministic_timevarying}

In this section, we consider mixed measurements which are allowed to vary over time. However, each measurement is predetermined, so that exactly $p(A)m$ (assuming that $p(A)m$ are integers) measurements use the measurement $A$. In contrast, in random time-varying measurements, each measurement is taken as $A$ with probability $p(A)$, and thus the number of measurements using $A$ is a random variable.

\subsection{Algorithms}
 In deterministic time-varying measurements, we first give the likelihood ratio test algorithm over the possible $\binom{n}{k}$ hypothesis.
\begin{algorithm}
 \SetAlgoLined
 \KwData{observation data $Y_1$, $Y_2$, ..., $Y_m$}
 \KwData{deterministic measurements $A^1$, $A^2$, ..., $A^m$}
 \KwResult{$k$ anomalous random variables}

\begin{itemize}
\item For each hypothesis $H_i$ ($1\leq i\leq L$), calculate the likelihood $P(Y|A,H_i)$:
\item Choose the hypothesis with the maximum likelihood.
\item Decide the corresponding $k$ random variables as the anomalous random variables.
\end{itemize}

 \caption{Likelihood Ratio Test from Deterministic Time-Varying Measurements}
 \label{alg:likelihoodratiotestdeterministic}
\end{algorithm}

 Similar to the analysis of random time-varying measurements, for the purpose of analyzing the error probability, we consider one hypothesis testing algorithm based on pairwise comparison.
\begin{algorithm}
 \SetAlgoLined
 \KwData{observation data $Y_1$, $Y_2$, ..., $Y_m$}
 \KwData{deterministic measurements $A^1$, $A^2$, ..., $A^m$}
 \KwResult{$k$ anomalous random variables}
\begin{itemize}
\item For all pairs of hypothesis $H_i$ and $H_j$ ($1\leq i, j \leq L$ and $i\neq j$), perform Neyman-Pearson testing of the following two hypothesis:
\begin{itemize}
\item $Y_1$, $Y_2$, ..., $Y_m$ follow the probability distribution $P(Y|H_i,A)$;
\item $Y_1$, $Y_2$, ..., $Y_m$ follow probability distribution $P(Y|H_j,A)$.
\end{itemize}

\item
  \eIf{there exists a certain $j^*$, such that $H_{j^*}$ is the winning hypothesis, whenever it is involved in a pairwise hypothesis testing, }{
   declare the $k$ random variables producing $H_{j^*}$ as anomalous random variables\;
   }{ declare a failure in finding the $k$ anomalous random variables.
  }

\end{itemize}

 \caption{Hypothesis Testing from Deterministic Time-Varying Measurements}
 \label{alg:timevaryingalg2deterministic}
\end{algorithm}

\subsection{Number of Samples for Deterministic Time-Varying Measurements}
\begin{theorem}
Consider time-varying deterministic observations $Y_j$, $1\leq j \leq m$, for $n$ random variables $X_1$, $X_2$, ..., and $X_n$. $L$ is the number of hypothesis for the distribution of the vector $(X_1,X_2,...,X_n)$.

For $\lambda\in[0,1]$ and two hypothesis $H_i$ and $H_j$ ($1\leq i ,j \leq L$), define
$$P_{\lambda}(Y|A)=\frac{P^\lambda(Y|A,H_i)P^{1-\lambda}(Y|A,H_j)}{\sum\limits_{Y}P^\lambda(Y|A,H_i)P^{1-\lambda}(Y|A,H_j)}$$

\begin{equation*}
Q_{\lambda, i\rightarrow j}=E_{A} \left\{D \left (     P_{\lambda}(Y|A)~||~P(Y|A,H_i)                \right ) \right\}
\end{equation*}
\begin{equation*}
Q_{\lambda, j\rightarrow i}=E_{A} \left\{D \left (     P_{\lambda}(Y|A)~||~P(Y|A,H_j)                \right ) \right\}
\end{equation*}

Furthermore, we define the \emph{outer conditional Chernoff information} $OC(P_{Y|A,H_i}, P_{Y|A,H_j})$ between $H_i$ and $H_j$, under deterministic time-varying measurements $A$,  as
$$OC(P_{Y|A,H_i}, P_{Y|A,H_j})=Q_{\lambda, i\rightarrow j}=Q_{\lambda, j\rightarrow i},$$
where $\lambda$ is chosen such that $Q_{\lambda, i\rightarrow j}=Q_{\lambda, j\rightarrow i}$.

Then with $O(\frac{k \log(n)}{\min\limits_{1\leq i,j \leq L, i\neq j} OC(P_{Y|A,H_i}, P_{Y|A,H_j})})$ random time-varying measurements, with high probability, Algorithms \label{alg:likelihoodratiotest_timevarying} and  \label{alg:likelihoodratiotest_timevarying} correctly identify the $k$ anomalous random variables.
Here $L$ is the number of hypothesis, $P_{Y|A,H_i}$, $1\leq i \leq n$ is the output probability distribution for measurements $Y$ under hypothesis $H_i$ and measurements $A$, and $OC(P_{Y|A,H_i}, P_{Y|A,H_j})$ is the outer conditional Chernoff information.

Moreover, the outer conditional Chernoff information  is also equal to
\begin{eqnarray*}
&&OC(P_{Y|A,H_i}, P_{Y|A,H_j})\\
&&=-\min_{0\leq \lambda \leq 1} \int P(A)\log ( \\
&&~~~~~~~~~~~~~~~~~~~~\int{P_{Y|A,H_i}^{\lambda}(x)P_{Y|A,H_j}^{1-\lambda}(x)\,dx}) \,dA\\
&&=-\min_{0\leq \lambda \leq 1}  E_{A} \left(\log \left(\int{P_{Y|A,H_i}^{\lambda}(x)P_{Y|A,H_j}^{1-\lambda}(x)\,dx}\right)\right)
\end{eqnarray*}
\label{thm:numberofsamples_deterministic_timevarying}
\end{theorem}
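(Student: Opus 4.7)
The plan is to follow the pairwise Neyman-Pearson template of Theorem \ref{thm:numberofsamples}, replacing the single-letter Chernoff exponent by the correct exponent for a product of \emph{non-identical} measurement distributions. Fix two hypotheses $H_i, H_j$. Since the underlying $X^t$ are independent across $t$ and the $A^t$ are prescribed deterministically, the joint likelihoods factorize as $\prod_{t=1}^m P(Y_t \mid A^t, H)$. The Chernoff (Bhattacharyya-tilted) error exponent for distinguishing the two product distributions is therefore
\begin{equation*}
C_m \;=\; -\min_{0\le \lambda \le 1}\; \sum_{t=1}^{m} \log\!\int P^{\lambda}(Y\mid A^t, H_i)\, P^{1-\lambda}(Y\mid A^t, H_j)\, dY,
\end{equation*}
since $\log$ of a product is a sum of logs and a single $\lambda$ is optimized for the whole sum.

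Next, using the deterministic prescription that exactly $p(A)m$ of the $A^t$ equal $A$, the sum over $t$ reorganizes as an expectation over $P(A)$, yielding
\begin{equation*}
\tfrac{1}{m}\, C_m \;=\; -\min_{0\le \lambda \le 1}\, E_{A}\!\left[\log\!\int P^{\lambda}(Y\mid A, H_i)\, P^{1-\lambda}(Y\mid A, H_j)\, dY\right].
\end{equation*}
This is precisely the second (expectation-of-log) formula for $OC(P_{Y|A,H_i}, P_{Y|A,H_j})$ asserted in the theorem. It is instructive to contrast this with the random time-varying case, where the expectation sits \emph{inside} the logarithm; by Jensen's inequality $OC \ge IC$, so deterministic scheduling never hurts the single-measurement error exponent.

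To recover the balanced-KL (tilted-distribution) characterization, I would use the identity
\begin{equation*}
-\log\!\int P^{\lambda}(Y|A,H_i) P^{1-\lambda}(Y|A,H_j)\, dY = \lambda\, D(P_\lambda(Y|A)\,\|\,P(Y|A,H_i)) + (1-\lambda)\, D(P_\lambda(Y|A)\,\|\,P(Y|A,H_j)),
\end{equation*}
which follows by substituting the definition of $P_\lambda$ and cancelling its normalizing constant. Taking $E_{A}$ on both sides and differentiating in $\lambda$, an envelope argument (variations through $P_\lambda$ itself cancel because $P_\lambda$ is the tilted minimizer) shows that the derivative equals $Q_{\lambda,i\to j}-Q_{\lambda,j\to i}$. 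Convexity of the objective in $\lambda$ then forces the minimizer $\lambda^*$ to satisfy $Q_{\lambda^*,i\to j}=Q_{\lambda^*,j\to i}$, and substituting $\lambda^*$ back into the identity shows their common value equals $OC$.

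Finally, I would close by the union bound exactly as in Theorem \ref{thm:numberofsamples}. Assuming WLOG that $H_1$ is true, the pairwise Neyman-Pearson error between $H_1$ and any $H_j$ decays as $2^{-m\cdot OC(P_{Y|A,H_1}, P_{Y|A,H_j})} \le 2^{-m\cdot OC_{\min}}$; summing over the $L-1\le n^k$ competing hypotheses gives total error at most $n^k\cdot 2^{-m\cdot OC_{\min}}$, so $m=\Theta(k\log n/OC_{\min})$ observations suffice for high-probability identification. The main obstacle I anticipate is the envelope-theorem step that reconciles the $E_{A}[\log\cdot]$ form with the balanced-KL form; once that identity and its derivative are confirmed, the proof reduces to the Chernoff-plus-union-bound machinery of the preceding theorems.
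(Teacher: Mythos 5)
Your proof is correct, but it reaches the exponent by a genuinely different route from the paper. The paper restricts (for exposition) to a two-atom measurement distribution on a discrete alphabet, invokes Sanov's theorem and the method of types to pose the second-kind error exponent as a constrained minimization over empirical distributions, extracts the tilted distributions $P_\lambda(Y|A)$ via Lagrange multipliers, balances the two resulting exponents to obtain the $Q_{\lambda,i\to j}=Q_{\lambda,j\to i}$ characterization, and only then verifies that the balancing $\lambda$ coincides with the minimizer of $E_A[\log\int P_i^\lambda P_j^{1-\lambda}]$ by matching first-order conditions. You go the other way: you start from the factorization of the likelihood over the $m$ independent, non-identically distributed observations, apply the Chernoff/Bhattacharyya bound to the product (a single $\lambda$ for the whole sum), and immediately read off the $-\min_\lambda E_A[\log\int P_i^\lambda P_j^{1-\lambda}]$ form for general $P(A)$; you then recover the balanced-KL form from the identity $-\log\int P_i^\lambda P_j^{1-\lambda}=\lambda D(P_\lambda\|P_i)+(1-\lambda)D(P_\lambda\|P_j)$ together with convexity of the log-moment-generating function in $\lambda$. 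Your derivation is cleaner, handles general $P(A)$ and continuous observation spaces without the paper's "for simplicity" restrictions, and your Jensen comparison $OC\ge IC$ is a nice observation the paper does not make explicit. What the paper's Sanov-based argument buys in exchange is the converse direction --- that this exponent is \emph{exact} and not merely achievable --- which is not needed for the sample-complexity claim of this theorem but is relied upon later in Lemma \ref{lem:exactexponent}; if you wanted your version to support that lemma as well, you would need to supplement the product-form Chernoff upper bound with a matching lower bound (e.g., via G\"artner--Ellis or the paper's types argument). The concluding union-bound step is identical in both proofs.
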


\begin{proof}
In Algorithm \ref{alg:timevaryingalg2deterministic}, for two different hypothesis $H_i$ and $H_j$, we choose the probability likelihood ratio threshold of the Neyman-Pearson testing in a way, such that the hypothesis testing error probability decreases with the largest error exponent. Now we focus on deriving what this largest error exponent is, under deterministic time-varying measurements.

For simplicity of presentation, we first consider a special case: there are only two possible measurements $A_1$ and $A_2$; and one half of the measurements are $A_1$ while the other half are $A_2$. The conclusions can be extended to general distribution $P(A)$ on $A$, in a similar way of reasoning. Suppose we take $m$ measurements in total, our assumption translates to that $\frac{1}{2} m$ measurements are taken as $A_1$, and $\frac{1}{2} m$ measurements are taken as $A_2$. Without loss of generality, we consider two hypothesis denoted by $H_1$ and $H_2$. Under measurement $A_1$, we assume that $H_1$ generates distribution $P_1$ for observation data; $H_2$ generates distribution $P_2$ for observation data. Under $A_2$, we assume that $H_1$ generates distribution $P_3$ for observation data; $H_2$ generates distribution $P_4$ for observation data. In addition, we assume that the observation data is over a discrete space $\chi$, which can also be generalized to a continuous space without affecting the conclusion in this theorem.

Suppose that $P$ is the empirical distribution of measurement data under measurements $A_1$, and that $P'$ is the empirical distribution of measurement data under measurements $A_2$. Then the Neyman-Pearson testing decides that hypothesis $H_1$ is true if, for a certain constant $T$, $$\frac{1}{2}[D(P||P_2)-D(P||P_1)]+\frac{1}{2}[D(P'||P_4)-D(P'||P_3)]\geq \frac{1}{n} \log(T). $$
By the Sanov's theorem \cite{coverbook}, the error exponent of the second kind, namely wrongly deciding ``hypothesis $H_1$ is true'' when hypothesis $H_2$ is actually true, is given by the following optimization problem.
$$ \min_{P,P'} \frac{1}{2} D(P||P_2)+\frac{1}{2} D(P'||P_4)$$
 subject to
 $$\frac{1}{2}[D(P||P_2)-D(P||P_1)]+\frac{1}{2}[D(P'||P_4)-D(P'||P_3)]\geq \frac{1}{n} \log(T) $$
 $$\sum_{x} P(x)=1$$
 $$\sum_{x} P'(x)=1$$

Using the Lagrange multiplier method, we try to minimize
\begin{eqnarray}
&&\frac{D(P||P_2)}{2} +\frac{D(P'||P_4)}{2} +\lambda \left( \frac{D(P||P_2)-D(P||P_1)}{2}\right.\nonumber\\
&&\left.+\frac{D(P'||P_4)-D(P'||P_3)}{2} \right )\nonumber\\
&&+ v_1 \sum_{x} P(x)+v_2 \sum_{x} P'(x) \nonumber
\end{eqnarray}

Differentiating with respect to $P(x)$ and $P'(x)$, we get
$$\frac{1}{2}[\log(\frac{P(x)}{P_2(x)})+1+\lambda \log (\frac{P_1(x)}{P_2(x)})]+v_1=0$$
$$\frac{1}{2}[\log(\frac{P'(x)}{P_4(x)})+1+\lambda \log (\frac{P_3(x)}{P_4(x)})]+v_2=0$$

From these equations, we can obtain the minimizing $P$,
$$P=P_{\lambda}(Y|A_1)=\frac{P_1^\lambda(x)P_2^{1-\lambda}(x)}{\sum_{x \in \chi}{P_1^\lambda(x)P_2^{1-\lambda}(x)} }$$
$$P'=P_{\lambda}(Y|A_2)=\frac{P_3^\lambda(x)P_4^{1-\lambda}(x)}{\sum_{x \in \chi}{P_3^\lambda(x)P_4^{1-\lambda}(x)} },$$
where $\lambda$ is chosen such that $\frac{1}{2}[D(P||P_2)-D(P||P_1)]+\frac{1}{2}[D(P'||P_4)-D(P'||P_3)]= \frac{1}{n} \log(T)$.

By symmetry, the error exponent of the second kind, and the error exponent of the first kind, are respectively
$$\frac{1}{2} D(P_{\lambda}(Y|A_1)||P_2)+\frac{1}{2} D(P_{\lambda}(Y|A_2)||P_4)$$
$$\frac{1}{2} D(P_{\lambda}(Y|A_1)||P_1)+\frac{1}{2} D(P_{\lambda}(Y|A_2)||P_3)$$

The first exponent is an increasing function in $\lambda$, and the second exponent is a decreasing function in $\lambda$. In fact, the optimal error exponent, which is the minimum of these two exponents, is achieved when they are equal:
\begin{eqnarray*}
&~&\frac{1}{2} D(P_{\lambda}(Y|A_1)||P_2)+\frac{1}{2} D(P_{\lambda}(Y|A_2)||P_4)\\
&=&\frac{1}{2} D(P_{\lambda}(Y|A_1)||P_1)+\frac{1}{2} D(P_{\lambda}(Y|A_2)||P_3)
\end{eqnarray*}
This finishes the characterization of the optimal error exponent in pairwise hypothesis testing, under deterministic time-varying measurements.

Based on this result, we further show that the derived optimal exponent is equivalent to
\begin{eqnarray}
&&OC(P_{Y|A,H_i}, P_{Y|A,H_j})\nonumber\\
&&=-\min_{0\leq \lambda \leq 1} \int P(A) \log\left(\int{P_{Y|A,H_i}^{\lambda}(x)P_{Y|A,H_j}^{1-\lambda}(x)\,dx} \right)\,dA \nonumber\\
&&=-\min_{0\leq \lambda \leq 1}  E_{A} \left( \log \left(\int{P_{Y|A,H_i}^{\lambda}(x)P_{Y|A,H_j}^{1-\lambda}(x)\,dx}\right) \right)\nonumber\\
&&\leq E_{A} \{C(P_{Y|A,H_i}, P_{Y|A, H_j})\}.
\label{eq:conditionalchernoff inequality}
\end{eqnarray}
In this proof, we restrict our attention to $H_i=H_1$ and $H_j=H_2$.

We first show that the $0 \leq \lambda \leq 1$ that minimizes $E_{A} \left( \log \left(\int{P_{Y|A,H_i}^{\lambda}(x)P_{Y|A,H_j}^{1-\lambda}(x)\,dx}\right) \right)$ is exactly $\lambda$ which leads to the equality:
\begin{eqnarray}
&~&\frac{1}{2} D(P_{\lambda}(Y|A_1)||P_2)+\frac{1}{2} D(P_{\lambda}(Y|A_2)||P_4) \nonumber\\
&=&\frac{1}{2} D(P_{\lambda}(Y|A_1)||P_1)+\frac{1}{2} D(P_{\lambda}(Y|A_2)||P_3)
\label{eq:optimallambda}
\end{eqnarray}

On the one hand, this equality means that
\begin{eqnarray}
0&=&\frac{1}{2} [D(P_{\lambda}(Y|A_1)||P_2)-D(P_{\lambda}(Y|A_1)||P_1)] \nonumber \\
&~& +\frac{1}{2} [D(P_{\lambda}(Y|A_2)||P_4)-D(P_{\lambda}(Y|A_2)||P_3)]\nonumber\\
&=&\frac{1}{2} \frac{\sum_{x}{P_1^{\lambda}(x)P_2^{1-\lambda}(x) \log(\frac{P_2(x)}{P_1(x)})} }{ \sum_{x} P_1^\lambda(x)P_2^{1-\lambda}(x)}\nonumber\\
&~& +\frac{1}{2} \frac{\sum_{x}{P_3^{\lambda}(x)P_4^{1-\lambda}(x) \log(\frac{P_4(x)}{P_3(x)})} }{ \sum_{x} P_3^\lambda(x)P_4^{1-\lambda}(x)}\nonumber
\end{eqnarray}

Moreover, under this $\lambda$, the exponent is equal to,
\begin{eqnarray*}
&&\frac{1}{2} D(P_{\lambda}(Y|A_1)||P_1)+\frac{1}{2} D(P_{\lambda}(Y|A_2)||P_3) \\
&=& \frac{1}{2} \frac{1}{\sum_{x} P_1^\lambda(x)P_2^{1-\lambda}(x)}
\sum_{x}P_1^{\lambda}(x)P_2^{1-\lambda}(x) \\
&~&[(1-\lambda)\log(\frac{P_2(x)}{P_1(x)})-\log(\sum_{x} P_1^{\lambda}(x)P_2^{1-\lambda}(x))]\\
&~& +\frac{1}{2} \frac{1}{\sum_{x} P_3^\lambda(x)P_4^{1-\lambda}(x)}\sum_{x}P_3^{\lambda}(x)P_4^{1-\lambda}(x) \\
&~&[(1-\lambda)\log(\frac{P_4(x)}{P_3(x)})-\log(\sum_{x} P_3^{\lambda}(x)P_4^{1-\lambda}(x))]
\end{eqnarray*}

Recognizing the first parts of both summations over $x$ is equal to $0$, we have
\begin{eqnarray*}
&&\frac{1}{2} D(P_{\lambda}(Y|A_1)||P_1)+\frac{1}{2} D(P_{\lambda}(Y|A_2)||P_3) \\
&=&-\frac{1}{2} \log(\sum_{x} P_1^{\lambda}(x)P_2^{1-\lambda}(x))\\
&~&-\frac{1}{2} \log(\sum_{x} P_3^{\lambda}(x)P_4^{1-\lambda}(x)),
\end{eqnarray*}
which is just the the expression,  under $\lambda^*$ achieving \ref{eq:optimallambda}, of the second definition of the outer conditional Chernoff information.

On the other hand, to minimize
$$E_{A} \left( \log \left(\int{P_{Y|A,H_i}^{\lambda}(x)P_{Y|A,H_j}^{1-\lambda}(x)\,dx}\right) \right),$$
the derivative with respect to $\lambda$ is equal to $0$:
\begin{eqnarray*}
0&=&\frac{1}{2} \frac{\sum_{x}{P_1^{\lambda}(x)P_2^{1-\lambda}(x) \log(\frac{P_2(x)}{P_1(x)})} }{ \sum_{x} P_1^\lambda(x)P_2^{1-\lambda}(x)}\\
&~& +\frac{1}{2} \frac{\sum_{x}{P_3^{\lambda}(x)P_4^{1-\lambda}(x) \log(\frac{P_4(x)}{P_3(x)})} }{ \sum_{x} P_3^\lambda(x)P_4^{1-\lambda}(x)}.
\end{eqnarray*}
This means that the optimizing $\lambda_{min}=\lambda^*$. So $E_{A} \left( \log \left(\int{P_{Y|A,H_i}^{\lambda}(x)P_{Y|A,H_i}^{1-\lambda}(x)\,dx}\right) \right)=\frac{1}{2} D(P_{\lambda}(Y|A_1)||P_1)+\frac{1}{2} D(P_{\lambda}(Y|A_2)||P_3)$, where $\lambda^*$ satisfies
\begin{eqnarray}
&~&\frac{1}{2} D(P_{\lambda}(Y|A_1)||P_2)+\frac{1}{2} D(P_{\lambda}(Y|A_2)||P_4) \nonumber\\
&=&\frac{1}{2} D(P_{\lambda}(Y|A_1)||P_1)+\frac{1}{2} D(P_{\lambda}(Y|A_2)||P_3)
\label{eq:optimallambda}
\end{eqnarray}

%
%
%
%
Overall, the smallest possible error exponent between any pair of hypothesis is
\begin{equation*}
E=\min_{1\leq i,j \leq L, i\neq j} OC(P_{Y|A,H_i}, P_{Y|A,H_j}).
\end{equation*}

Without loss of generality, we assume $H_1$ is the true hypothesis. Since the error probability $P_{e}$ in the Neyman-Pearson testing is
$$P_{e}\doteq2^{-m OC(P_{Y|A,H_i}, P_{Y|A,H_j})} \leq 2^{-mE}.$$ By a union bound over the $L-1$ possible pairs $(H_1, H_j)$, the probability that $H_1$ is not correctly identified as the true hypothesis is upper bounded by $L2^{-mE}$ in terms of scaling. So $m=\Theta(k \log(n)E^{-1})$ samplings are enough for identifying the $k$ anomalous samples with high probability.  When $E$ grows polynomially with $n$, this implies a significant reduction in the number of samples needed.

\end{proof}
\section{Examples of Mixed Observations Strictly Reducing Hypothesis Testing Error Probability}
\label{sec:example}

In this section, we give examples in which smaller error probability can be achieved in hypothesis testing through mixed observations. In particular, we consider Gaussian distributions in our examples.

\subsection{Example 1: two Gaussian random variables}
In this example, we consider $n=2$, and $k=1$. We group the two random variables $X_1$ and $X_2$ in a random vector $(X_1, X_2)$. Suppose that there are two hypothesis for a $2$-dimensional random vector $(X_1, X_2)$:
\begin{itemize}
 \item $H_1$: $X_1$ and $X_2$ are independent random variables; $X_1$ and $X_2$ follow Gaussian distributions $\mathcal{N}(A, \sigma^2)$ and $\mathcal{N}(B, \sigma^2)$ respectively.
 \item $H_2$: $X_1$ and $X_2$ are independent random variables; $X_1$ and $X_2$ follow Gaussian distributions $\mathcal{N}(B, \sigma^2)$ and $\mathcal{N}(A, \sigma^2)$ respectively.
\end{itemize}
Here $A$ and $B$ are two distinct constants, and $\sigma^2$ is the variance of the two Gaussian random variables. At each time instant, only one observation is allowed, and the observation is restricted to a linear mixing of $X_1$ and $X_2$. Namely
$$Y_j=\alpha_1 X_1+\alpha_2 X_2.$$
We assume that the linear mixing does not change over time.

Clearly, when $\alpha_1 \neq 0$ and $\alpha_2=0$, this linear mixing reduces to a separate observation of $X_1$; and when $\alpha_1 = 0$ and $\alpha_2 \neq 0$, it reduces to a separate observation of $X_2$. In both these two cases, the observation follows distribution $\mathcal{N}(A,\sigma^2)$ for one hypothesis, and follows distribution $\mathcal{N}(B, \sigma^2)$ for the other hypothesis. The Chernoff information between these two distributions are
$$C(\mathcal{N}(A, \sigma^2),\mathcal{N}(B, \sigma^2))=\frac{(A-B)^2}{8\sigma^2}$$

When the hypothesis $H_1$ holds, the observation $Y_j$ follows the distribution $\mathcal{N}(\alpha_1 A+\alpha_2 B, (\alpha_1^2+\alpha_2^2)\sigma^2 )$. Similarly, when the hypothesis $H_2$ holds, the observation $Y_j$ follows the distribution $\mathcal{N}(\alpha_1 B+\alpha_2 A, (\alpha_1^2+\alpha_2^2)\sigma^2 )$.  The Chernoff information between these two Gaussian distributions $\mathcal{N}(\alpha_1 A+\alpha_2 B, (\alpha_1^2+\alpha_2^2)\sigma^2 )$, and $\mathcal{N}(\alpha_1 B+\alpha_2 A, (\alpha_1^2+\alpha_2^2)\sigma^2 )$, is given by
\begin{eqnarray*}
&&\frac{[(\alpha_1 A+\alpha_2 B)-(\alpha_1 B+\alpha_2 A)]^2}{8(\alpha_1^2+\alpha_2^2)\sigma^2}\\
&&=\frac{[(\alpha_1-\alpha_2)^2(A-B)^2]}{8(\alpha_1^2+\alpha_2^2)\sigma^2}\\
&&\leq \frac{2(A-B)^2}{8\sigma^2},
\end{eqnarray*}
where the last inequality follows from Cauchy-Schwarz inequality, and takes equality when $\alpha_1=-\alpha_2$.

Compared with the Chernoff information for separate observations of $X_1$ or $X_2$, the linear mixing of $X_1$ and $X_2$ doubles the Chernoff information. This shows that linear mixed observations can offer strict improvement in terms of reducing the error probability in hypothesis testing, by increasing the error exponent, compared with separate observations of random variables.

\subsection{Example 2: Gaussian random variables with different means}
In this example, we consider the mixed observations for two hypothesis of Gaussian random vectors. In general, suppose that there are two hypothesis for an $N$-dimensional random vector $(X_1, X_2,...,X_{N})$,
\begin{itemize}
 \item $H_1$: $(X_1, X_2,...,X_{N})$ follow jointly Gaussian distributions $\mathcal{N}(\mu_1, \Sigma_1)$.
 \item $H_2$: $(X_1, X_2,...,X_{N})$ follow jointly Gaussian distributions $\mathcal{N}(\mu_2, \Sigma_2)$.
\end{itemize}
Here $\Sigma_1$ and $\Sigma_2$ are both $N \times N$ covariance matrices.

At each time instant, only one observation is allowed, and the observation is restricted to a time-invariant linear mixing of $X=(X_1, X_2,...,X_{N})$. Namely
$$Y_j=<A, X>.$$

Under these conditions, the observation follows distribution $\mathcal{N}(A^T \mu_1, A^T \Sigma_1 A)$ for the hypothesis $H_1$, and follows distribution $\mathcal{N}(A^T \mu_2, A^T \Sigma_2 A)$ for the other hypothesis $H_2$. We would like to choose a linear mixing $A$ which maximizes the Chernoff information between the two possible univariate Gaussian distributions, namely
$$\max_{A} C(\mathcal{N}(A^T \mu_1, A^T \Sigma_1 A),\mathcal{N}(A^T \mu_2, A^T \Sigma_2 A) )$$

In fact, the Chernoff information between these two distributions are \cite{ChernoffInformation}
\begin{eqnarray*}
&~&C(\mathcal{N}(A^T \mu_1, A^T \Sigma_1 A),\mathcal{N}(A^T \mu_2, A^T \Sigma_2 A) )\\
&=&\max_{0\leq \alpha \leq 1} \left[ \frac{1}{2} \log{\left( \frac{A^T (\alpha \Sigma_1+(1-\alpha) \Sigma_2) A}{(A^T \Sigma_1 A)^{\alpha} (A^T \Sigma_2 A)^{1-\alpha}} \right)} \right.\\
&~&~~~~~~\left.+\frac{\alpha (1-\alpha) (A^T (\mu_1-\mu_2))^2}{2 A^T (\alpha \Sigma_1+(1-\alpha) \Sigma_2) A} \right]
\end{eqnarray*}

We first look at the special case when $\Sigma=\Sigma_1=\Sigma_2$. Under this condition, the maximum Chernoff information is given by
$$\max_{A}\max_{0\leq \alpha \leq 1} \frac{\alpha(1-\alpha) [A^T (\mu_1-\mu_2)]^2}{2A^T \Sigma A}.$$

Taking $A'=\Sigma^{\frac{1}{2}} A$, then this reduces to
$$\max_{A}\max_{0\leq \alpha \leq 1} \frac{\alpha(1-\alpha) [(A')^T \Sigma^{-\frac{1}{2}}(\mu_1-\mu_2)]^2}{2(A')^T A'}.$$

From Cauchy-Schwarz inequality, it is easy to see that the optimal $\alpha=\frac{1}{2}$, $A'=\Sigma^{-\frac{1}{2}}(\mu_1-\mu_2)$, and $A=\Sigma^{-1}(\mu_1-\mu_2)$. Under these conditions, the maximum Chernoff information is given by
$$\frac{1}{8} (\mu_1-\mu_2)^T \Sigma^{-1} (\mu_1-\mu_2).$$
Note that in general, $A'=\Sigma^{-\frac{1}{2}}(\mu_1-\mu_2)$ is not a separate observation of a certain individual random variable, but rather a linear mixing of the $N$ random variables.

\subsection{Example 3: Gaussian random variables with different variances}
In this example, we look at the mixed observations for Gaussian random variables with different variances.  Consider the same setting in Example 2, except that we now look at the special case when $\mu=\mu_1=\mu_2$. We will study the optimal linear mixing under this scenario. Then the Chernoff information becomes
\begin{eqnarray*}
&~&C(\mathcal{N}(A^T \mu, A^T \Sigma_1 A),\mathcal{N}(A^T \mu, A^T \Sigma_2 A) )\\
&=&\max_{0\leq \alpha \leq 1} \frac{1}{2} \log{\left( \frac{A^T (\alpha \Sigma_1+(1-\alpha) \Sigma_2) A}{(A^T \Sigma_1 A)^{\alpha} (A^T \Sigma_2 A)^{1-\alpha}} \right)}
\end{eqnarray*}

To find the optimal projection $A$, we are solving this optimization problem
$$\max_{A}\max_{0\leq \alpha \leq 1} \frac{1}{2} \log{\left( \frac{A^T (\alpha \Sigma_1+(1-\alpha) \Sigma_2) A}{(A^T \Sigma_1 A)^{\alpha} (A^T \Sigma_2 A)^{1-\alpha}} \right)}.$$

For a certain $A$, we define
$$B=\frac{\max{(A^T \Sigma_1 A, A^T \Sigma_2 A) }}{\min{(A^T \Sigma_1 A, A^T \Sigma_2 A)  }}.$$

By symmetry over $\alpha$ and $1-\alpha$, maximizing the Chernoff information can always be reduced to
$$\max_{A} \max_{0\leq \alpha \leq 1} \frac{1}{2} \log{\left( \frac{\alpha + (1-\alpha) B}{B^{1-\alpha}} \right)}.$$

The optimal value is given by
$$\max_{A}\left \{-\frac{1}{B} e^{[-1+\frac{B}{B-1}\log(B)]  }   \left( -B+ \frac{B-1+B\log(B)}{\log(B)} \right)    \right\}  $$
under the optimizing $\alpha$ given by
$$\alpha=\frac{-(B-1)+B\log(B)}{(B-1) \log(B)}.$$

We note that the optimal value is an increasing function of $B$.
\begin{lemma}
The optimal objective value of the following optimization problem
$$\max_{A} \max_{0\leq \alpha \leq 1} \frac{1}{2} \log{\left( \frac{\alpha + (1-\alpha) B}{B^{1-\alpha}} \right)},$$
is an increasing function in $B$.
\end{lemma}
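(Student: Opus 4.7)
The plan is to prove the stronger pointwise statement that for each fixed $\alpha \in [0,1]$, the inner quantity $g_\alpha(B) := (\alpha + (1-\alpha)B)/B^{1-\alpha}$ is non-decreasing in $B$ on $[1,\infty)$, from which monotonicity of the optimal value $f(B) := \max_{\alpha \in [0,1]} \tfrac{1}{2}\log g_\alpha(B)$ will follow by the elementary fact that a pointwise supremum of a family of non-decreasing functions is itself non-decreasing. Note that $B \geq 1$ by its very definition as the ratio of $\max$ to $\min$ of $A^T \Sigma_1 A$ and $A^T \Sigma_2 A$, so this is the only regime that needs to be considered.

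For the first step, I would rewrite $g_\alpha(B) = \alpha B^{\alpha-1} + (1-\alpha) B^{\alpha}$ and differentiate with respect to $B$ to obtain $g'_\alpha(B) = \alpha(1-\alpha)B^{\alpha-2}(B-1)$. This is clearly non-negative for $B \geq 1$ and $\alpha \in [0,1]$, and strictly positive when $\alpha \in (0,1)$ and $B > 1$. Composing with the monotone map $\tfrac{1}{2}\log(\cdot)$ preserves this monotonicity, and taking the supremum over $\alpha$ on both sides of $\log g_\alpha(B_1) \leq \log g_\alpha(B_2)$, valid pointwise for $1 \leq B_1 \leq B_2$, immediately yields $f(B_1) \leq f(B_2)$.

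If one wishes strict monotonicity on $(1,\infty)$, I would appeal to the envelope theorem: the explicit maximizer $\alpha^*(B) = \frac{-(B-1)+B\log B}{(B-1)\log B}$ displayed just above the lemma sits in the interior of $[0,1]$, so $\tfrac{df}{dB} = \tfrac{1}{2 g_{\alpha^*(B)}(B)}\, g'_{\alpha^*(B)}(B) > 0$ for $B > 1$, with the partial derivative in $\alpha$ vanishing at $\alpha^*$ by optimality. The main obstacle is really conceptual rather than technical: one has to notice that the outer maximization over $A$ in the lemma statement is vacuous once $B$ is regarded as the free parameter, so the claim reduces to a one-variable monotonicity question that collapses after a single derivative computation.
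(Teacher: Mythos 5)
Your proposal is correct and follows essentially the same route as the paper: both reduce the claim to showing that for each fixed $\alpha\in[0,1]$ the quantity $\frac{\alpha+(1-\alpha)B}{B^{1-\alpha}}=\alpha B^{\alpha-1}+(1-\alpha)B^{\alpha}$ is non-decreasing in $B\geq 1$ via the derivative $\alpha(1-\alpha)(B^{\alpha-1}-B^{\alpha-2})\geq 0$, then conclude by monotonicity of $\log$ and of pointwise suprema. Your write-up is in fact slightly cleaner than the paper's (which misstates the derivative as being taken with respect to $\alpha$), and the optional strict-monotonicity addendum via the interior maximizer is a harmless bonus.
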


\begin{proof}
 We only need to show that for any $\alpha \in [0,1]$, $\left( \frac{\alpha + (1-\alpha) B}{B^{1-\alpha}} \right)$ is an increasing function in $B\geq 1$.
In fact, the derivative of it with respect to $\alpha$ is
$$\alpha(1-\alpha) (B^{\alpha-1}-B^{\alpha-2})\geq 0.$$
Then the conclusion of this lemma immediately follows.
\end{proof}

This means we need to maximize $B$, in order to maximize the Chernoff information. So to find the optimal $A$ to maximize the Chernoff information, we solve the following two optimization problems.
\begin{equation}
       \max_{A}{A^T \Sigma_1 A } ~~~~~\text{subject~to}~~A^T \Sigma_2 A \leq 1;
\label{eq:twoquadraticprogram1}
\end{equation}
and
\begin{equation}
       \max_{A}{A^T \Sigma_2 A } ~~~~~\text{subject~to}~~A^T \Sigma_1 A \leq 1;
\label{eq:twoquadraticprogram2}
\end{equation}
Then the maximum of the two optimal objective values is equal to the optimizing $B$, and the corresponding optimizing $A$ is the optimal linear projection we are looking for. These two optimization problems are not convex optimization programs, however, they still admit zero duality gap from from the S-procedure, and can be efficiently solved \cite{boyd}. In fact, they are respectively equivalent to the following two semidefinite programming optimization problems, and thus can be efficiently solved.
\begin{equation}
\begin{aligned}
& \underset{\gamma, \lambda}{\text{minimize}}
& & -\gamma \\
& \text{subject to}
& & \lambda \geq 0 \\
&&& {\left( \begin{array}{cc}
-\Sigma_1+\lambda\Sigma_2 & 0\\
0 & -\lambda-\gamma  \end{array} \right) \succeq 0}.
\end{aligned}
\label{eq:sdpfor2quadratic1}
\end{equation}

\begin{equation}
\begin{aligned}
& \underset{\gamma, \lambda}{\text{minimize}}
& & -\gamma \\
& \text{subject to}
& & \lambda \geq 0 \\
&&& {\left( \begin{array}{cc}
-\Sigma_2+\lambda\Sigma_1 & 0\\
0 & -\lambda-\gamma  \end{array} \right) \succeq 0}.
\end{aligned}
\label{eq:sdpfor2quadratic2}
\end{equation}

\subsection{Example 4: $k=1$ anomalous random variable among $n=7$ random variables }

Consider another example, where $k=1$ and $n=7$. $n-k=6$ random variables follow the distribution $\mathcal{N} (0, 1)$; and the other random variable follows distribution $\mathcal{N} (0, \sigma^2)$, where $\sigma^2\neq 1$.
So overall, there are $7$ hypothesis:
\begin{itemize}
 \item $H_1$: $X_1$, $X_2$,..., and $X_7$ are independent random variables; $X_1$, $X_2$,..., and $X_7$ follow Gaussian distributions $\mathcal{N}(0, \sigma^2)$, $\mathcal{N}(0, 1)$,..., and $\mathcal{N}(0, 1)$ respectively.
 \item $H_2$: $X_1$, $X_2$,..., and $X_7$ are independent random variables; $X_1$, $X_2$,..., and $X_7$ follow Gaussian distributions $\mathcal{N}(0, 1)$, $\mathcal{N}(0, \sigma^2)$,..., and $\mathcal{N}(0, 1)$ respectively.
 \item ...
 \item $H_7$: $X_1$, $X_2$,..., and $X_7$ are independent random variables; $X_1$, $X_2$,..., and $X_7$ follow Gaussian distributions $\mathcal{N}(0, 1)$, $\mathcal{N}(0, 1)$,... and $\mathcal{N}(0, \sigma^2)$ respectively.
\end{itemize}

We first assume that separation observations of these $7$ random variables are made. For any pair of hypothesis $H_i$ and $H_j$, the probability distributions for the output are respectively $\mathcal{N}(0, \sigma^2)$ and $\mathcal{N}(0, 1)$, when $X_i$ is observed; the probability distributions for the output are respectively $\mathcal{N}(0, 1)$ and $\mathcal{N}(0, \sigma^2)$, when $X_j$ is observed. From \ref{eq:conditionalchernoff inequality}, the Chernoff information between $H_i$ and $H_j$ is given by
\begin{eqnarray*}
&&OC(P_{Y|A,H_i}, P_{Y|A, H_j})\\
&&=-\min_{0\leq \lambda \leq 1}  E_{A} \left( \log \left(\int{P_{Y|A,H_i}^{\lambda}(x)P_{Y|A,H_i}^{1-\lambda}(x)\,dx}\right) \right)\\
&&=-\min_{0\leq \lambda \leq 1}  \left [\frac{1}{7} \left( \log \left(\int{P_{\mathcal{N}(0,1)}^{\lambda}(x)P_{\mathcal{N}(0,\sigma^2)}^{1-\lambda}(x)\,dx}\right) \right)\right. \\
&&~~~~\left. +\frac{1}{7} \left( \log \left(\int{P_{\mathcal{N}(0,\sigma^2)}^{\lambda}(x)P_{\mathcal{N}(0,1)}^{1-\lambda}(x)\,dx}\right) \right)\right]
\label{eq:differentvariancesSeparateObservations}
\end{eqnarray*}

Optimizing over $\lambda$, we obtain the optimal $\lambda=\frac{1}{2}$, and that
\begin{eqnarray}
OC(P_{Y|A,H_i}, P_{Y|A, H_j})=\frac{1}{7} \log(\frac{B+1}{2B^{\frac{1}{2}}}),
\label{eq:differentvariancesSeparateObservations_finalresults}
\end{eqnarray}
where $B=\frac{\max{(\sigma^2, 1)}}{\min{(\sigma^2, 1)}}$.

Now we consider using the parity check matrix of $(7,4)$ Hamming codes to do the measurements. For any pair of hypothesis $H_i$ and $H_j$, there is always a measurement row which measure one and only one of $X_i$ and $X_j$. Without loss of generality, we assume that that measurement measures $X_i$ but not $X_j$.
Suppose $H_i$ is true, then the output probability distribution for that measurement is $\mathcal{N}(0, \sigma^2+A)$; otherwise when $H_j$ is true,  the output probability distribution is given by $\mathcal{N}(0, 1+A)$, where $1+A$ is the number of ones in that measurement row. From \ref{eq:conditionalchernoff inequality},  the Chernoff information between $H_i$ and $H_j$ is bounded by
\begin{eqnarray*}
&&OC(P_{Y|A,H_i}, P_{Y|A, H_j})\\
&&=-\min_{0\leq \lambda \leq 1}  E_{A} \left( \log \left(\int{P_{Y|A,H_i}^{\lambda}(x)P_{Y|A,H_i}^{1-\lambda}(x)\,dx}\right) \right)\\
&&\geq -\min_{0\leq \lambda \leq 1}  \frac{1}{3}  \log \left(\int{P_{\mathcal{N}(0,\sigma^2+A)}^{\lambda}(x)P_{\mathcal{N}(0,1+A)}^{1-\lambda}(x)\,dx}\right)
\label{eq:differentvariancesHamming}
\end{eqnarray*}
Again this lower bound is given by
$$\frac{1}{3}\left \{-\frac{1}{B} e^{[-1+\frac{B}{B-1}\log(B)]  }   \left( -B+ \frac{B-1+B\log(B)}{\log(B)} \right)    \right\},$$
where $$B=\frac{\max{(\sigma^2+A, 1+A)}}{\min{(\sigma^2+A, 1+A)}}.$$

Simply taking $\lambda=\frac{1}{2}$, we get another lower bound of $C(P_{Y|A,H_i}, P_{Y|A, H_j})$:
$$\frac{1}{3}\times \frac{1}{2} \log\left(\sqrt{\frac{\sigma^2+A}{4(1+A)}}+\sqrt{\frac{1+A}{4(\sigma^2+A)}}\right).$$

When $\sigma^2 \gg 1$, for separate observations,
\begin{eqnarray}
OC(P_{Y|A,H_i}, P_{Y|A, H_j})\sim \frac{1}{14} \log(\sigma^2);
\label{eq:differentvariancesSeparateObservations_scaling}
\end{eqnarray}
while for measurements through the parity-check matrix of Hamming codes, a lower bound $\underline{O(P_{Y|A,H_i}, P_{Y|A, H_j})}$ of $OC(P_{Y|A,H_i}, P_{Y|A, H_j})$ asymptotically satisfies
\begin{eqnarray}
\underline{OC(P_{Y|A,H_i}, P_{Y|A, H_j})}\sim \frac{1}{12} \log{(\sigma^2)}.
\label{eq:differentvariancesHamming_scaling}
\end{eqnarray}
So in the end, the minimum Chernoff information between any pair of hypothesis, under mixed measurements using Hamming codes, is bigger than the Chernoff information obtained using separate observations. This means that mixed observations can offer strict improvement in the error exponent of hypothesis testing problems.

\section{Characterization of Optimal Measurements Maximizing the Error Exponent}
\label{sec:optimalmixing}

In this section, we derive a characterization of the optimal deterministic time-varying measurements which maximize the error exponent of hypothesis testing. We further explicitly design the optimal measurements for some simple examples. We begin with the following lemma about the error exponent of hypothesis testing.
\begin{lemma}
Suppose that there are overall $L=\binom{n}{k}$ hypothesis assumptions. For any fixed $k$ and $n$, the error exponent of the error probability of hypothesis testing is given by
\begin{equation*}
E=\min_{1\leq i,j \leq L, i\neq j} OC(P_{Y|A,H_i}, P_{Y|A,H_j}).
\end{equation*}
\label{lem:exactexponent}
\end{lemma}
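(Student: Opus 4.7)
The plan is to prove the equality by establishing two matching inequalities. The achievability direction, namely that some test attains the exponent $\min_{i \neq j} OC(P_{Y|A,H_i}, P_{Y|A,H_j})$, was already established in Theorem \ref{thm:numberofsamples_deterministic_timevarying}: the pairwise Neyman-Pearson algorithm satisfies $P_e \leq L \cdot 2^{-mE}$, and since $L = \binom{n}{k}$ is polynomial in $n$ for fixed $k$, the prefactor contributes only a $o(m)$ correction to the exponent. So only the converse direction, namely that no decision rule can achieve a strictly larger exponent, remains to be proved here.

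The converse step I would carry out is a reduction from $L$-ary to binary testing. Let $(i^*, j^*)$ be a pair attaining the minimum $OC$. Restricting to the prior that places mass $1/2$ on each of $H_{i^*}$ and $H_{j^*}$, the Bayes error of any $L$-ary test is at least one half of the Bayes error of the binary test between these two hypotheses (with equal prior), since any $L$-ary decision rule can be post-processed into a binary one by mapping every answer other than $H_{i^*}$ to $H_{j^*}$ and vice versa, without increasing the binary error. Hence any exponent achievable in the multi-hypothesis setting is also achievable in the binary problem between $H_{i^*}$ and $H_{j^*}$, and a converse for that binary problem yields the converse for the $L$-ary problem.

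For binary testing between $H_{i^*}$ and $H_{j^*}$, the observations $Y_1, \dots, Y_m$ are independent but \emph{not} identically distributed: $Y_t$ is drawn from $P(Y|A^t, H_{i^*})$ or $P(Y|A^t, H_{j^*})$, and the deterministic schedule visits each measurement $A$ exactly $p(A)m$ times. Grouping the observations by measurement type $A$ and applying Chernoff's theorem to the resulting product distribution, the optimal binary error exponent equals
\begin{equation*}
-\min_{0 \leq \lambda \leq 1}\ \sum_A p(A)\, \log \left(\int P^{\lambda}(Y|A,H_{i^*})\, P^{1-\lambda}(Y|A,H_{j^*})\, dY\right),
\end{equation*}
which, by the second characterization of $OC$ derived inside Theorem \ref{thm:numberofsamples_deterministic_timevarying}, is precisely $OC(P_{Y|A,H_{i^*}}, P_{Y|A,H_{j^*}}) = E$.

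The main obstacle is justifying the extension of Chernoff's theorem from the i.i.d.\ setting to this independent-but-not-identically-distributed product setting. The key subtlety is that the Chernoff-type tilting must use a \emph{single} scalar $\lambda$ across all measurement types $A$ (not a separate $\lambda_A$ per type), because the product tilting forms a one-parameter exponential family. This single-$\lambda$ constraint is exactly what the definition of $OC$ enforces, and it is what makes $OC$ in general strictly smaller than $E_A[C(P_{Y|A,H_i}, P_{Y|A,H_j})]$, consistent with the inequality $(\ref{eq:conditionalchernoff inequality})$. Once this tightness is checked, combining the two directions yields the claimed equality.
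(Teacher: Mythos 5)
Your proposal matches the paper's own proof in structure: achievability via the union bound over pairwise Neyman--Pearson tests (with the polynomial prefactor $L=\binom{n}{k}$ absorbed into the exponent for fixed $k$), and the converse via reduction to the binary test between the pair attaining the minimum, whose optimal exponent equals $OC$ by the Sanov/Chernoff characterization already derived in Theorem~\ref{thm:numberofsamples_deterministic_timevarying}. If anything, your converse is spelled out more carefully than the paper's (which simply invokes a genie revealing that the true hypothesis lies in the worst pair and asserts the binary exponent), in particular by making explicit that the independent-but-not-identically-distributed Chernoff theorem with a single tilting parameter $\lambda$ is the step that carries the argument.
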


\begin{proof}
We first give an upper bound on the error probability of hypothesis testing. Without loss of generality, we assume $H_1$ is the true hypothesis. Since the error probability $P_{e}$ in the Neyman-Pearson testing is
$$P_{e}\doteq2^{-m OC(P_{Y|A,H_i}, P_{Y|A, H_j})} \leq 2^{-mE}.$$
By a union bound over the $L-1$ possible pairs $(H_1, H_j)$, the probability that $H_1$ is not correctly identified as the true hypothesis is upper bounded by $L2^{-mE}$ in terms of scaling.

Now we give a lower bound on the error probability of hypothesis testing. Without loss of generality, we assume that $E$ is achieved between the hypothesis $H_1$ and the hypothesis $H_2$, namely,
\begin{equation*}
E=OC(P_{Y|A,H_1}, P_{Y|A,H_2}).
\end{equation*}

Suppose that we are given the prior information that either hypothesis $H_1$ or $H_2$ is true. Knowing this prior information will not increase the error probability. Under this prior information, the error probability behaves asymptotically as $2^{-m E}$ as $m\rightarrow \infty$.  This shows that the error exponent of hypothesis testing is no bigger than $E$.
\end{proof}

The following theorem gives a simple characterization of the optimal $p(A)$. This enables us to explicitly find the optimal mixed measurements, under certain special cases of Gaussian random variables.
\begin{theorem}
In order to maximize the error exponent in hypothesis testing, the optimal mixed measurements have a distribution $p^*(A)$ which maximizes the minimum of the pairwise outer Chernoff information between different hypothesis:
\begin{equation*}
P^*(A)=\mbox{arg}\max_{P(A)} \min_{1\leq i,j \leq L, i\neq j} OC(P_{Y|A,H_i}, P_{Y|A,H_j}).
\end{equation*}

When $k=1$ and the $n$ random variables of interest are independent Gaussian random variables with the same variances, the optimal $P^*(A)$ admit a discrete probability distribution:
\begin{equation*}
P(A)= \sum_{\sigma \mbox{as a permutation} } \frac{1}{n!}\delta(A-\sigma(A^*)),
\end{equation*}
where $A^*$ is a constant $n$-dimensional vector such that
$$A^*=\mbox{arg}\max_{A} \sum\limits_{1\leq i,j \leq L, i\neq j} C(P_{Y|A,H_i}, P_{Y|A,H_j}). $$
\end{theorem}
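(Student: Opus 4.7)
The first assertion is immediate from Lemma~\ref{lem:exactexponent}: under deterministic time-varying measurements with distribution $P(A)$ the error exponent is exactly $E(P)=\min_{i\neq j} OC(P_{Y|A,H_i},P_{Y|A,H_j})$, so maximizing the error exponent over $P$ is by definition the same as maximizing this min.

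For the Gaussian, $k=1$, equal-variance case the plan is to exploit the $S_n$-symmetry of the problem. Because the $n-1$ nominal variables are i.i.d.\ Gaussian and only the identity of the anomalous variable distinguishes the hypotheses, simultaneously permuting the entries of $A$ and relabeling the hypothesis by any $\sigma\in S_n$ leaves the conditional distribution of $Y$ invariant: $P_{Y\mid\sigma(A),H_{\sigma(i)}}=P_{Y\mid A,H_i}$. Substituting into the definition of $OC$ shows that the quantity $OC(P_{Y\mid A,H_i},P_{Y\mid A,H_j})$ evaluated on the push-forward $\sigma_*P$ equals the same quantity evaluated on $P$ with indices permuted by $\sigma^{-1}$; taking the minimum over ordered pairs then produces a functional of $P$ that is $S_n$-invariant.

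I would then restrict attention to the convex set $\mathcal{D}$ of $S_n$-symmetric probability measures on $\mathbb{R}^n$. By transitivity of $S_n$ on ordered pairs of distinct indices, every pairwise $OC(P_{Y\mid A,H_i},P_{Y\mid A,H_j})$ takes a common value on $\mathcal{D}$, so the objective collapses on $\mathcal{D}$ to a single functional $F(P)=OC(P_{Y\mid A,H_1},P_{Y\mid A,H_2})$. Because $OC$ is a supremum over $\lambda\in[0,1]$ of functionals that are linear in $P$, the functional $F$ is convex on $\mathcal{D}$. The maximum of a convex function over the convex set $\mathcal{D}$ is attained at an extreme point, and the extreme points of $\mathcal{D}$ are precisely the uniform measures on single $S_n$-orbits $\{\sigma(A^*):\sigma\in S_n\}$. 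This yields the discrete form $P^*(A)=\frac{1}{n!}\sum_{\sigma}\delta(A-\sigma(A^*))$.

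To identify $A^*$ I would substitute $P^*$ into the definition of $OC$ to obtain
\begin{equation*}
OC(P_{Y\mid A,H_1},P_{Y\mid A,H_2})=-\min_{\lambda}\frac{1}{n(n-1)}\sum_{i'\neq j'}\log\!\int P^{\lambda}_{Y\mid A^*,H_{i'}}P^{1-\lambda}_{Y\mid A^*,H_{j'}}\,dx,
\end{equation*}
and use the $(i',j')\leftrightarrow(j',i')$ involution of the Gaussian log-integrand to pin the optimizing $\lambda$ at $\tfrac12$ independently of $A^*$; maximizing over $A^*$ at this fixed $\lambda$ is then equivalent to maximizing $\sum_{i\neq j}C(P_{Y\mid A,H_i},P_{Y\mid A,H_j})$, giving the stated characterization of $A^*$. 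The principal technical obstacle is the reduction from arbitrary $P$ to $\mathcal{D}$: because $OC$ is convex (not concave) in $P$, a direct Jensen-style orbit-averaging does not manifestly preserve the objective, so I have to route the reduction through the $S_n$-invariance of the objective together with the extreme-point characterization above rather than by straightforward symmetrization.
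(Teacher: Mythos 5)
Your first claim is handled exactly as in the paper (it is read off from Lemma~\ref{lem:exactexponent}), but your treatment of the Gaussian $k=1$ case takes a genuinely different route and breaks down at precisely the step you flag. The paper never tries to reduce to symmetric measures first. Instead it bounds the objective for an \emph{arbitrary} $p(A)$ by the chain
\begin{equation*}
\min_{i\neq j} OC(P_{Y|A,H_i},P_{Y|A,H_j})\;\le\;\frac{1}{\binom{n}{2}}\sum_{i\neq j} OC(P_{Y|A,H_i},P_{Y|A,H_j})\;\le\;\frac{1}{\binom{n}{2}}\int p(A)\sum_{i\neq j}C(P_{Y|A,H_i},P_{Y|A,H_j})\,dA\;\le\;\frac{1}{\binom{n}{2}}\sum_{i\neq j}C(P_{Y|A^*,H_i},P_{Y|A^*,H_j}),
\end{equation*}
where the middle step is the inequality $OC\le E_A\{C\}$ from (\ref{eq:conditionalchernoff inequality}). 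It then verifies that the orbit-uniform measure on $\{\sigma(A^*)\}$ \emph{attains} this universal bound: for equal-variance Gaussians the optimizing $\lambda$ equals $\tfrac12$ for every pair and every $A$, so $\min_\lambda$ and $E_A$ commute and $OC=E_A\{C\}$ with equality, while transitivity of $S_n$ on ordered pairs makes all pairwise $OC$'s coincide and hence equal their average. Your closing computation (pinning $\lambda=\tfrac12$ and identifying $A^*$ as the maximizer of $\sum_{i\neq j}C$) is essentially this achievability half.

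The gap is the reduction from arbitrary $P$ to the convex set $\mathcal{D}$ of $S_n$-symmetric measures. Invariance of the functional $G(P)=\min_{i\neq j}OC_{ij}(P)$ under the $S_n$-action only says the \emph{set} of maximizers is permutation-invariant; it does not produce a maximizer inside $\mathcal{D}$. The usual symmetrization $P\mapsto\bar P=\frac{1}{n!}\sum_\sigma\sigma_*P$ would accomplish this if $G$ were concave, but, as you yourself observe, each $OC_{ij}$ is convex in $P$ and a minimum of convex functionals is neither convex nor concave, so Jensen gives you no comparison between $G(\bar P)$ and $G(P)$. ``Routing through invariance plus the extreme-point characterization'' cannot close this hole, because Bauer's maximum principle only locates maximizers \emph{within} $\mathcal{D}$ (and even there requires a compactness hypothesis that the set of probability measures on $\R^n$ fails). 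As written, your argument shows at most that the orbit-uniform measure is optimal among symmetric measures. To repair it, discard the reduction to $\mathcal{D}$ and instead establish the universal upper bound above ($\min\le$ average, then $OC\le E_A\{C\}\le\max_A$), keeping your $\lambda=\tfrac12$ argument to verify that the proposed $P^*(A)$ achieves it.
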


\begin{proof}
The first statement follows from Lemma \ref{lem:exactexponent}.  So we only need to prove the optimal mixed measurements for Gaussian random variables with the same variance under $k=1$.
For any $p(A)$, we have
\begin{eqnarray*}
&~& \min_{1\leq i,j \leq L, i\neq j} OC(P_{Y|A,H_i}, P_{Y|A,H_j}) \\
 &\leq& \frac{1}{\binom{n}{2}}\sum\limits_{1\leq i,j \leq L, i\neq j} OC(P_{Y|A,H_i}, P_{Y|A,H_j})\\
 &\leq&\frac{1}{\binom{n}{2}}\sum\limits_{1\leq i,j \leq L, i\neq j} \int_{A}{p(A) OC(P_{Y|A,H_i}, P_{Y|A,H_j})}\,d A \\
  &=&\frac{1}{\binom{n}{2}}\int_{A}{p(A) \sum\limits_{1\leq i,j \leq L, i\neq j}  OC(P_{Y|A,H_i}, P_{Y|A,H_j})}\,d A \\
 &\leq & \frac{1}{\binom{n}{2}}\sum\limits_{1\leq i,j \leq L, i\neq j} OC(P_{Y|A^*,H_i}, P_{Y|A^*,H_j}),
\end{eqnarray*}
where the second inequality follows from (\ref{eq:conditionalchernoff inequality}).

On the other hand, for two Gaussian distributions with the same variances, the optimizing $\lambda$ in (\ref{eq:conditionalchernoff inequality}) is always equal to $\frac{1}{2}$, no matter what
$P(A)$ is chosen.  So, by symmetry, when
\begin{equation*}
P(A)= \sum_{\sigma \mbox{as a permutation} } \frac{1}{n!}\delta(A-\sigma(A^*)),
\end{equation*}
for any two different hypothesis $H_i$ and $H_j$,
\begin{eqnarray*}
&~& OC(P_{Y|A,H_i}, P_{Y|A,H_j}) \\
 &=&\int_{A}{p(A) OC(P_{Y|A,H_i}, P_{Y|A,H_j})}\,d A \\
  &=&\frac{1}{\binom{n}{2}}\int_{A}{p(A) \sum\limits_{1\leq i,j \leq L, i\neq j}  OC(P_{Y|A,H_i}, P_{Y|A,H_j})}\,d A \\
 &= & \int_{A}{p(A) \frac{1}{\binom{n}{2}} \sum\limits_{1\leq i,j \leq L, i\neq j}  OC(P_{Y|A^*,H_i}, P_{Y|A^*,H_j})}\,d A \\
 &=& \frac{1}{\binom{n}{2}} \sum\limits_{1\leq i,j \leq L, i\neq j}  OC(P_{Y|A^*,H_i}, P_{Y|A^*,H_j}).
\end{eqnarray*}

This means that
\begin{eqnarray*}
&&\min_{1\leq i,j \leq L, i\neq j} OC(P_{Y|A,H_i}, P_{Y|A,H_j})\\
&=&\frac{1}{\binom{n}{2}} \sum\limits_{1\leq i,j \leq L, i\neq j}  OC(P_{Y|A^*,H_i}, P_{Y|A^*,H_j}).
\end{eqnarray*}

Since the upper bound on $\min_{1\leq i,j \leq L, i\neq j} OC(P_{Y|A,H_i}, P_{Y|A,H_j})$ is achieved, we conclude that $p(A^*)$ is the optimizing distribution.
\end{proof}

\section{Simulation Results}
\label{sec:simulation}
In this section, we numerically evaluate the performance of mixed observations in hypothesis testing. We first simulate the error probability in identifying anomalous random variables through linear mixed observations.

 The linear mixing used in the first simulation is based on sparse bipartite graphs. In sparse bipartite graphs \cite{ExpanderCode, XHExpander, IndykSparse}, $n$ variable nodes on the left are used to represent the $n$ random variables, and $m$ measurement nodes on the right are used to represent the $m$ measurements. If and only if the $i$-th random variable is nontrivially involved in the $j$-th measurement, there is an edge connecting the $i$-th variable node to the $j$-th measurement node. Unlike sparse bipartite graphs already used in low-density parity-check codes, and compressed sensing \cite{ExpanderCode, XHExpander, IndykSparse}, a novelty in this paper is that our sparse bipartite graphs are allowed to have more measurement nodes than variable nodes, namely $m \geq n$. In this simulation, there are $6$ edges emanating from each measurement node on the right, and  there are $\frac{6m}{n}$ edges emanating from each variable node on the left. In our simulation,  after a uniformly random permutation, the $6m$ edges emanating from the measurements nodes are plugged into the $6m$ edge ``sockets'' of the left variable nodes. If there is an edge connecting connecting the $i$-th variable node to the $j$-th measurement node, then the linear mixing coefficient before the $i$-th random variable in the $j$-th measurement is set to $1$; otherwise that linear mixing coefficient is set to $0$.

For this simulation, we take $n=100$, and let $m$ vary from $50$ to $300$. $k=1$ random variable follows the Gaussian distribution $\mathcal{N}(0, 100)$, and the other $(n-k)=99$ random variables follow another distribution $\mathcal{N}(0,1)$. The likelihood ratio test algorithm was used to find the anomalous random variables through the described linear mixed observations based on sparse bipartite graphs. For comparison, we also implement the likelihood ratio test algorithms for separate observations of random variables, where we first make $\lfloor\frac{m}{n} \rfloor$ separate observations of each random variables, and then made an additional separate observation for uniformly randomly selected $(m\mod n)$ random variables.  For each $m$, we perform $1000$ random trials, and record the number of trials failing to identify the anomalous random variables.  The error probability, as a function of $m$, is plotted in Figure \ref{fig:N100Variance}. We can see that linear mixed observation offers significant reduction in the error probability of hypothesis testing, under the same number of observations.

Under the same simulation setup as in Figure \ref{fig:N100Variance}, we test the error probability performance of mixed observations for two Gaussian distributions: the anomalous Gaussian distribution $\mathcal{N}(0, 1)$, and the common Gaussian distribution $\mathcal{N}(8,1)$. We also slightly adjust the number of total random variables as $n=102$, to make sure that each random variable participates in the same integer number $\frac{6m}{n}$ of measurements. Mixed observations visibly reduces the error probability under the same number of measurements, compared with separate observations. For example, even when $m=68<n=102$, in $999$ out of $1000$ cases, the likelihood ratio test correctly identifies the anomalous random variable.

\begin{figure}[t]
\centering
\includegraphics[width=3.75in, height=2.5in]{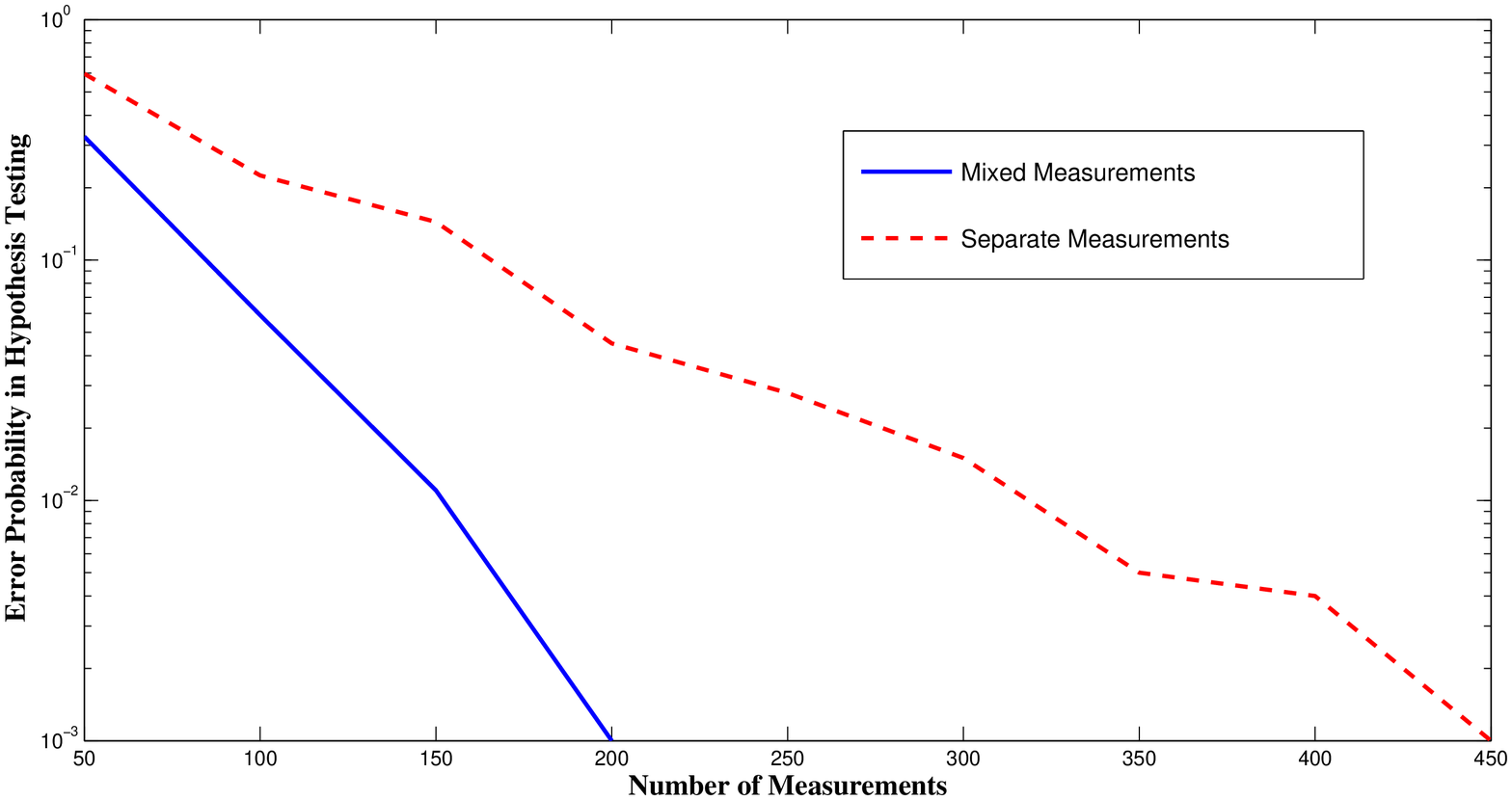}
\caption{Error probability versus $m$}\label{fig:N100Variance}
\end{figure}

\begin{figure}[t]
\centering
\includegraphics[width=3.75in, height=2.5in]{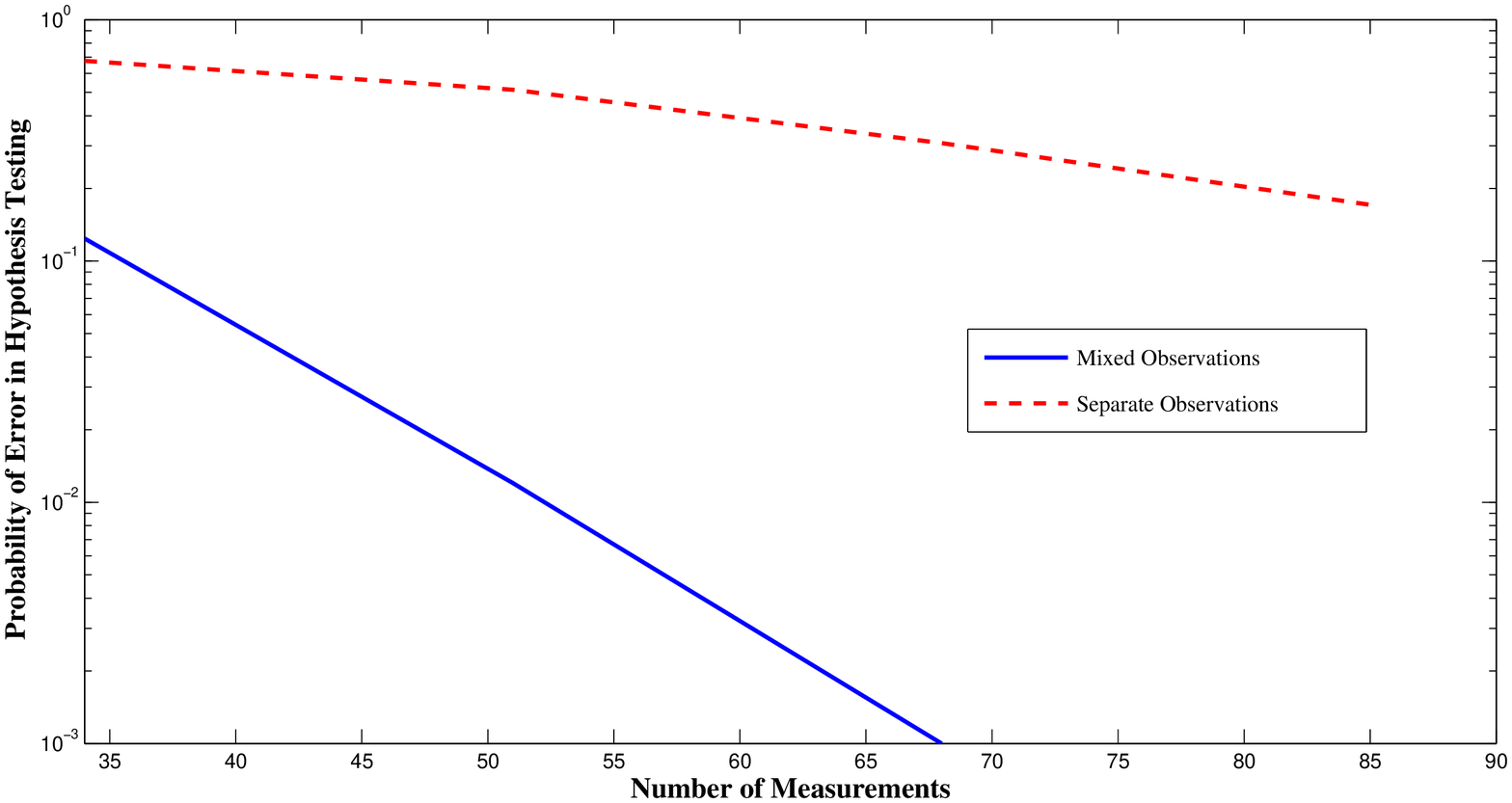}
\caption{Error probability versus $m$}\label{fig:N102Expectation}
\end{figure}

\section{Conclusion}\label{sec:conclusion}
In this paper, we have studied the problem of finding $k$ anomalous random variables following a different probability distribution among $n$ random variables, by using \emph{non-adaptive} mixed observations of these $n$ random variables. Our analysis has shown that mixed observations, compared with separate observations of individual random variables, can significantly reduce the number of samplings needed to identify the anomalous random variables. Compared with general compressed sensing problems, in our setting, each random variable may take dramatically different realizations in different observations.

There are some questions that remain open in performing hypothesis testing from mixed observations.
\begin{itemize}
\item What if the $n$ random variables are correlated random variables instead of independent random variables? Do mixed observations offer advantages?
\item What are the optimal constructions minimizing the number of mixed observations for general random variables?
\item What efficient algorithms are available to identify the $k$ anomalous random variables, using the probabilistic mixed observations, instead of the exhaustive search in this regime?
\item How are the hypothesis testing results from mixed observations affected by observation noises and errors?
\item What if the $k$ anomalous random variables are allowed to have different distributions? Can we further distinguish what distributions each anomalous  random variable follows?  We note that when we allow anomalous random variables to have different distributions, the traditional compressed sensing problem is just a special case of this hypothesis testing problem from mixed observations.
\end{itemize}
\bibliographystyle{IEEEbib}

\end{document}